\documentclass[aps,prd,longbibliography,nofootinbib,floatfix]{revtex4-2}

\usepackage{graphicx}
\usepackage{xcolor}
\usepackage{amsmath,amssymb,bm,mathtools}
\usepackage{siunitx}
\usepackage{booktabs}
\usepackage{microtype}
\usepackage{natbib}
\usepackage{hyperref}
\usepackage{threeparttable}
\hypersetup{colorlinks=true,allcolors=blue}
\usepackage{comment}
\newcommand{\lcdm}{\ensuremath{\Lambda\mathrm{CDM}}}
\newcommand{\nulcdm}{\ensuremath{\nu\Lambda\mathrm{CDM}}}

\newcommand{\wcdm}{\ensuremath{w\mathrm{CDM}}}
\newcommand{\wacdm}{\ensuremath{w_0w_a\mathrm{CDM}}}
\newcommand{\rd}{r_{\rm d}}

\newcommand{\DM}{D_{\rm M}}
\newcommand{\DHu}{D_{\rm H}}

\newcommand{\KL}{D_{\rm KL}}
\newcommand{\Lth}{\mathcal{L}}

\newcommand{\Ltot}{\mathcal{L}_{\rm tot}}
\newcommand{\vect}[1]{\boldsymbol{#1}}

\newcommand{\tablefont}{\small}
\newcommand{\tighttable}{%
  \setlength{\tabcolsep}{4.5pt}%
  \renewcommand{\arraystretch}{1.16}%
}
\newcommand{\tcell}[2]{\begin{minipage}[t]{#1}\raggedright\strut #2\strut\end{minipage}}

\usepackage{pgfplots}
\usepgfplotslibrary{groupplots,fillbetween}
\usetikzlibrary{calc,positioning,arrows.meta}
\pgfplotsset{compat=1.18}

\pgfplotsset{
  desiAxis/.style={
    tick align=outside,
    tick style={black, line width=0.45pt},
    axis line style={line width=0.45pt},
    xlabel style={font=\small},
    ylabel style={font=\small},
    tick label style={font=\footnotesize},
    legend style={
      font=\footnotesize,
      draw=none,
      fill=none,
      cells={anchor=west}
    },
    legend cell align=left,
    grid=major,
    major grid style={draw=black!10, line width=0.20pt},
    samples=201,
    clip mode=individual
  },
  theorycurve/.style={black, line width=0.50pt},
  refcurve/.style={black, dashed, line width=0.50pt},
  datapts/.style={only marks, mark=*, mark size=1.05pt, black}
}

\usepackage{amsthm}

\newtheorem{proposition}{Proposition}

\theoremstyle{remark}
\newtheorem{remark}{Remark}

\newcommand{\FAR}{\mathrm{FAR}}
\newcommand{\ECE}{\mathrm{ECE}}
\newcommand{\NLS}{\mathrm{NLS}}
\newcommand{\JS}{\mathrm{JS}}
\newcommand{\one}{\mathbf{1}}

\newcommand{\Act}{\mathbf A}
\newcommand{\ActSpace}{\mathcal A}
\newcommand{\Vspace}{\mathcal V}
\newcommand{\Cat}{\mathcal M}
\newcommand{\FAP}{F_{\rm AP}}
\newcommand{\DV}{D_{\rm V}}

\newcommand{\logit}{\operatorname{logit}}
\newcommand{\BS}{\mathrm{BS}}
\newcommand{\TV}{\mathrm{TV}}

\allowdisplaybreaks[1]
\begin{document}

\title{Sector-Resolved Bayesian Model Averaging for DESI-Era Cosmology}

\author{Slava G. Turyshev}
\affiliation{%
Jet Propulsion Laboratory, California Institute of Technology,\\
4800 Oak Grove Drive, Pasadena, CA 91109-0899, USA
}%

\date{\today}

\begin{abstract}
We present a quotient-space Bayesian formulation for DESI-era anomaly interpretation. Given a pattern-labeled catalog with map \(i\mapsto \Act(i)\), the induced posterior \(p(\Act\mid D)\), sector inclusion probabilities \(P_\alpha\), co-activation probabilities \(P_{\alpha\beta}\), and grouped Bayes factors \(B_\alpha(D)\) are exact summaries over predeclared physical activation events. Pairwise comparisons such as \(\lcdm\) versus \(\wacdm\) remain ordinary Bayes-factor tests between specified families; the quotient construction addresses the coarser question of which physical sector carries posterior support when different sectors are represented by unequal numbers of catalog elements. We derive a sector-resolved DESI--CMB--SN likelihood specification for late-time background, early-time ruler, supernova calibration, perturbation, and gravitational-wave propagation sectors. The construction includes an Alcock--Paczynski/isotropic-scale BAO decomposition, a pure-ruler projection, analytic marginalization of low-rank supernova calibration modes, Fisher-normalized sector priors, inactive-sector leakage tests, log-evidence uncertainty propagation, and prior/sector-partition diagnostics. The result is a quantitative procedure for reporting model-comparison support at the level of physically interpretable sectors.
\end{abstract}

\maketitle

\section{Introduction}
\label{sec:intro}

Dark Energy Spectroscopic Instrument (DESI) era cosmology is beginning to face questions that are not exhausted by individual family-by-family model comparisons. The issue is no longer only whether a posterior in the \((w_0,w_a)\) plane \cite{Chevallier2001,Linder2003} drifts away from \(w=-1\). It is whether the available probes can distinguish among physically different sources of an apparent anomaly: late-time background deformation, early-time ruler physics, low-rank supernova calibration structure, perturbation-sector freedom, or tensor-sector propagation effects.

This distinction is already important for current DESI analyses. Official DESI Data Release 2 (DR2) analyses report support for late-time extensions of \(\Lambda\) cold dark matter (\(\Lambda\)CDM) in baryon acoustic oscillation (BAO)-based
combinations with the cosmic microwave background (CMB), whereas recent Bayesian reanalyses find that the corresponding evidence can weaken substantially once
prior volume is integrated over \cite{DESIDR2BAO2025,DESIDR2Extended2025,
OngYallupHandley2025,OngYallupHandley2026}. In parallel, recent supernova recalibration studies suggest that part of the apparent support for evolving dark energy can be absorbed by low-rank calibration structure \cite{DESDovekie2025,HuangCaiWang2025}. Posterior contours, Bayes factors, and data-set consistency metrics therefore answer different questions and should be reported as distinct statistical summaries.

Pairwise model comparisons and sector-level inference answer different Bayesian questions. A pre-specified comparison between two well-defined families, for example \(M_0=\lcdm\) and \(M_1=\wacdm\), is summarized by the usual Bayes factor \(B_{10}=Z_1/Z_0\). A sector claim, such as activation of the late-time background sector, is a coarser event obtained by summing posterior support over all catalog elements that realize that event. The quotient construction defines this event-level posterior before evidence aggregation, so that the answer is insensitive to prior-preserving refinements inside a fixed activation pattern.

The corresponding sector-level analysis treats the latent sector activated by the data as the inferential target, with phenomenological families serving as realizations of sector hypotheses. BAO measure both an Alcock--Paczynski shape
variable and a scale variable involving \(\DV/\rd\), and hence constrain combinations of \(E(z)\), distance ratios, and
\(\rd\), not \(H_0\) and \(\rd\) separately. Supernovae constrain relative luminosity distances only after calibration and selection structure have been marginalized. Full-shape clustering, redshift-space distortions, weak lensing, and CMB lensing probe scalar perturbations rather than the homogeneous
background alone. Standard sirens probe tensor-sector propagation. A claimed anomaly should therefore be localized at sector level before being interpreted as a particular microphysical model.

Bayesian product-space and trans-dimensional methods for model comparison are well established \cite{CarlinChib1995,Green1995,Trotta2008}, and Bayesian model averaging with posterior inclusion probabilities is standard statistical technology \cite{Hoeting1999BMA}. Bayesian model dimensionality and suspiciousness-based tension diagnostics are likewise established in cosmology
\cite{HandleyLemos2019,Lemos2019Suspiciousness,Hergt2026Consistency}. This work defines a DESI-specific inferential target: posterior inference on sector-activation patterns. The resulting quotient-space construction yields exact pattern posteriors, sector inclusion probabilities, pairwise co-activation probabilities, and grouped sector Bayes factors that depend on the catalog only through grouped evidences and pattern priors. In the DESI setting, this is the relevant invariance: sector conclusions should not change because one theoretical
sector is represented by more catalog elements than another.

The paper has three aims. First, it defines the exact sector-level objects to be reported: \(p(\Act\mid D)\), \(P_\alpha\), \(P_{\alpha\beta}\), and \(B_\alpha(D)\). Second, it derives a minimal sector-resolved likelihood specification and Fisher-normalized detection basis for late-time background,
ruler, supernova (SN)-calibration, perturbation, and gravitational-wave (GW)-propagation sectors. Third, it specifies the numerical-precision, data-block attribution, pattern-prior, sector-partition, calibration, posterior-predictive, and
robustness criteria required before a DESI-era anomaly can be interpreted physically. The recommended reporting set consists of pattern posteriors, sector posterior summaries, grouped Bayes factors, data-block sensitivity checks, BAO/SN diagnostic residuals, and calibration and robustness diagnostics.

The paper is organized as follows. Section~\ref{sec:stats} defines the inferential target by separating posterior estimation, evidence-based model comparison, and data-set consistency. Section~\ref{sec:grouped_likelihood} develops the grouped estimator and the associated likelihood specification. Section~\ref{sec:validation} states the validation criteria for sector-level claims. Section~\ref{sec:concl} summarizes the physical scope of the construction and its role in DESI-era cosmology.

\section{Inferential target after DESI}
\label{sec:stats}

Our objective is to infer which latent sector is activated by the
data, rather than merely which phenomenological fit minimizes residuals. In the present context, the relevant sector decomposition is
\begin{equation}
\Theta
=
\Theta_{\rm base}
\cup
\Theta_{E(z)}
\cup
\Theta_{\rd}
\cup
\Theta_{\rm SN}
\cup
\Theta_{\rm pert}
\cup
\Theta_{\rm GW},
\label{eq:sectordecomp}
\end{equation}
where \(\Theta_{\rm base}\) denotes the baseline cosmology,
\(\Theta_{E(z)}\) the late-time expansion-shape sector,
\(\Theta_{\rd}\) the early-time ruler sector,
\(\Theta_{\rm SN}\) the low-rank supernova calibration/selection sector,
\(\Theta_{\rm pert}\) perturbation-level freedom in growth and lensing, and
\(\Theta_{\rm GW}\) the gravitational-wave (GW) tensor sector.

Three inferential tasks must be separated. The first is posterior estimation within a fixed model \(M\):
\begin{equation}
p(\theta\mid D,M)
=
\frac{\Lth(D\mid\theta,M)\,\pi(\theta\mid M)}{Z_M},
\qquad
Z_M
=
\int d\theta\,
\Lth(D\mid\theta,M)\,\pi(\theta\mid M).
\label{eq:posterior}
\end{equation}
This answers the question: what parameter values are allowed if model \(M\) is assumed?

The second task is evidence-based model comparison. A useful identity is
\begin{equation}
\ln Z_M
=
\big\langle \ln\Lth \big\rangle_{p(\theta\mid D,M)}-\KL,
\label{eq:lnZdecomp}
\end{equation}
which makes explicit why a model can improve the best fit yet lose in evidence: the gain in fit quality need not compensate for the required compression of prior volume \cite{Hergt2026Consistency}.  In Eq.~(\ref{eq:lnZdecomp}), \(D_{\rm KL}\) denotes the Kullback--Leibler (KL) divergence between posterior and prior within model \(M\). A related diagnostic is the Bayesian model dimensionality,
\begin{equation}
d
\equiv
2\,{\rm Var}_{p(\theta\mid D,M)}
\!\big[\ln \Lth(D\mid\theta,M)\big]
=
2\,{\rm Var}_{p(\theta\mid D,M)}
\!\left[
\ln\frac{p(\theta\mid D,M)}{\pi(\theta\mid M)}
\right],
\label{eq:dimensionality}
\end{equation}
which measures the number of effectively constrained directions rather than the raw parameter count \cite{HandleyLemos2019}. For an \(n\)-dimensional Gaussian posterior fully contained in the prior support, Eq.~(\ref{eq:dimensionality}) recovers \(d=n\).

The third task is data-set consistency. If \(A\) and \(B\) are two data blocks, the evidence ratio and suspiciousness are
\begin{equation}
\ln R_{AB}=\ln Z_{AB}-\ln Z_A-\ln Z_B,
\qquad
\ln S=\ln R-\ln I,
\label{eq:suspiciousness}
\end{equation}
with \(\ln I\) the information-gain correction constructed from the corresponding Kullback--Leibler divergences \cite{Hergt2026Consistency}. In approximately Gaussian cases, \(-2\ln S\) may be calibrated using the effective shared dimensionality \(d_{\rm sh}=d_A+d_B-d_{AB}\). Posterior shifts and posterior-predictive checks then localize where the disagreement lives.

This separation is central to DESI-era interpretation. The official \(\sim 3.1\sigma\) preference for \wacdm\ in BAO+CMB \cite{DESIDR2BAO2025} and the mild Bayesian preference for \lcdm\ in DESI DR2+Planck \cite{OngYallupHandley2026} address different inferential tasks. These tasks are complementary rather than interchangeable. Posterior estimation constrains parameters within a fixed model; evidence-based comparison tests whether an extension is warranted; and data-set consistency diagnostics determine whether the contributing data blocks are mutually compatible. A complete post-DESI analysis should report all three. This three-way separation fixes the inferential target. The role of the grouped construction developed in Sec.~\ref{sec:grouped_likelihood} is to lift model comparison from individual model labels to sector-activation events without discarding ordinary posterior inference within models.

Table~\ref{tab:notation_guide} provides a compact reference for notation, units, and acronyms used throughout the paper. All logarithms are natural logarithms; logarithmic distances and scale variables are dimensionless, while dimensional distances are measured in the units adopted by the input likelihood, usually Mpc.

\begin{table}[!tbp]
\caption{Notation, symbols, units, and acronyms used in the manuscript. The table is intended as a quick reference for the grouped-inference layer and for the DESI--CMB--SN likelihood diagnostics.}
\label{tab:notation_guide}
\tablefont
\tighttable
\begin{tabular}{@{}lll@{}}
\toprule
\tcell{0.20\textwidth}{Symbol or acronym} &
\tcell{0.52\textwidth}{Definition} &
\tcell{0.20\textwidth}{Units or convention} \\
\midrule
\tcell{0.20\textwidth}{\(D\), \(D_b\), \(D_{\setminus b}\)} &
\tcell{0.52\textwidth}{Full compressed data vector, data block \(b\), and data vector with block \(b\) removed.} &
\tcell{0.20\textwidth}{Data-vector units} \\
\tcell{0.20\textwidth}{\(i\), \(\Cat\), \(\Cat_{\Act}\)} &
\tcell{0.52\textwidth}{Neutral catalog index, full catalog, and catalog elements assigned to activation pattern \(\Act\).} &
\tcell{0.20\textwidth}{Discrete labels} \\
\tcell{0.20\textwidth}{\(\Vspace_i\), \(\vartheta_i\)} &
\tcell{0.52\textwidth}{Parameter space and parameters of catalog element \(i\).} &
\tcell{0.20\textwidth}{Model dependent} \\
\tcell{0.20\textwidth}{\(\Act\), \(\ActSpace\), \(A_\alpha\)} &
\tcell{0.52\textwidth}{Binary activation vector, allowed pattern space, and indicator for sector \(\alpha\).} &
\tcell{0.20\textwidth}{\(A_\alpha\in\{0,1\}\)} \\
\tcell{0.20\textwidth}{\(\Theta_{\rm base}\), \(\Theta_{E(z)}\), \(\Theta_{\rd}\), \(\Theta_{\rm SN}\), \(\Theta_{\rm pert}\), \(\Theta_{\rm GW}\)} &
\tcell{0.52\textwidth}{Baseline, late-time background, early-time ruler, supernova calibration, perturbation, and tensor-propagation sectors.} &
\tcell{0.20\textwidth}{Physical sectors} \\
\tcell{0.20\textwidth}{\(Z_i\), \(Z_{\Act}\), \(\ln B_\alpha\)} &
\tcell{0.52\textwidth}{Catalog-element evidence, grouped pattern evidence, and sector inclusion Bayes factor.} &
\tcell{0.20\textwidth}{Evidence dimensionless; logs in nats} \\
\tcell{0.20\textwidth}{\(\pi(\Act)\), \(\pi(i\mid\Act)\), \(q_\alpha\)} &
\tcell{0.52\textwidth}{Pattern prior, within-pattern catalog prior, and marginal prior probability that sector \(\alpha\) is active.} &
\tcell{0.20\textwidth}{Probabilities} \\
\tcell{0.20\textwidth}{\(P_\alpha\), \(P_{\alpha\beta}\)} &
\tcell{0.52\textwidth}{Posterior sector activation and pairwise co-activation probabilities.} &
\tcell{0.20\textwidth}{Probabilities} \\
\tcell{0.20\textwidth}{\(E(z)\), \(\rd\), \(\DM\), \(\DHu\), \(\DV\), \(\FAP\)} &
\tcell{0.52\textwidth}{Normalized expansion rate, sound horizon, transverse comoving distance, Hubble distance, isotropic BAO distance, and Alcock--Paczynski variable.} &
\tcell{0.20\textwidth}{Distances in likelihood units; ratios dimensionless} \\
\tcell{0.20\textwidth}{\(\Delta\vect y\), \(\Delta\vect v\), \(a_{\rd}\), \(\Delta\vect v_\perp\)} &
\tcell{0.52\textwidth}{BAO AP residual, isotropic-scale residual, pure-ruler projection, and scale residual orthogonal to the pure-ruler direction.} &
\tcell{0.20\textwidth}{Dimensionless logarithmic residuals} \\
\tcell{0.20\textwidth}{\(M\), \(\vect q\), \(\widehat{\vect q}\), \(\vect r_\mu^\perp\)} &
\tcell{0.52\textwidth}{SN calibration design matrix, calibration coefficients, posterior mean calibration mode, and calibration-orthogonal residual.} &
\tcell{0.20\textwidth}{Magnitudes for SN residuals} \\
\tcell{0.20\textwidth}{\(\vect s_\alpha\), \(F_\alpha\), \(\sigma_\alpha\)} &
\tcell{0.52\textwidth}{Fisher-normalized sector amplitudes, Fisher metric, and active-sector prior width.} &
\tcell{0.20\textwidth}{Dimensionless in Fisher units} \\
\tcell{0.20\textwidth}{\(\FAR\), \(\ECE\), PPC, \(\JS\), \(\TV\)} &
\tcell{0.52\textwidth}{False-activation rate, expected calibration error, posterior-predictive check, Jensen--Shannon divergence, and total variation distance.} &
\tcell{0.20\textwidth}{Probabilities except \(\JS\) in nats} \\
\tcell{0.20\textwidth}{DESI, BAO, CMB, SN, FS, RSD, WL, GW} &
\tcell{0.52\textwidth}{Dark Energy Spectroscopic Instrument, baryon acoustic oscillations, cosmic microwave background, supernovae, full-shape clustering, redshift-space distortions, weak lensing, and gravitational waves.} &
\tcell{0.20\textwidth}{Acronyms} \\
\bottomrule
\end{tabular}
\end{table}

\section{Grouped estimator and sector-resolved likelihood specification}
\label{sec:grouped_likelihood}

This section proceeds from exact grouped inference on a pattern-labeled catalog to the sector parameterization and then to the likelihood specification. Section~\ref{subsec:grouped_estimator} defines the grouped estimator. Section~\ref{subsec:quotient_invariance} gives its quotient-space interpretation and refinement invariance. Section~\ref{subsec:basis_priors} introduces the Fisher-normalized detection basis and the associated standardized sector priors. Section~\ref{subsec:likelihood} specifies the minimal likelihood model used for sector-resolved inference. Section~\ref{subsec:implementation} records the numerical and methodological requirements for applying the construction in a likelihood analysis. Section~\ref{subsec:geometry_application} specifies a minimal eight-pattern DESI--CMB--SN geometry analysis.

\subsection{Grouped estimator on a pattern-labeled catalog}
\label{subsec:grouped_estimator}

For rigor, the catalog is treated as a disjoint union of model-specific
parameter spaces rather than as a single common Euclidean space:
\begin{equation}
\Vspace \equiv
\bigsqcup_{\Act\in\ActSpace}\bigsqcup_{i\in\Cat_{\Act}}\Vspace_i,
\qquad
\vartheta_i\in\Vspace_i.
\label{eq:disjoint_union}
\end{equation}
Here \(i\) is a neutral catalog index, \(\Vspace_i\) is the parameter space of that catalog element, \(\ActSpace\) is the set of allowed binary activation patterns, and \(\Cat_{\Act}\) is the set of catalog elements assigned to pattern \(\Act\). This notation is deliberately distinct from the standard cosmological
symbols \(\Omega_m\), \(h\), and \(w\).

The hypermodel posterior is
\begin{equation}
p(\Act,i,\vartheta_i\mid D)
\propto
\one\!\{i\in\Cat_{\Act}\}\,
\pi(\Act)\,\pi(i\mid \Act)\,\pi(\vartheta_i\mid i)\,
\Lth(D\mid \vartheta_i,i),
\label{eq:hypermodel_rigorous}
\end{equation}
with catalog-element evidence
\begin{equation}
Z_i
=
\int_{\Vspace_i} d\vartheta_i\,
\Lth(D\mid \vartheta_i,i)\,\pi(\vartheta_i\mid i).
\label{eq:Zi_rigorous}
\end{equation}

The activation pattern is the binary vector
\begin{equation}
\Act
\equiv
\big(A_{E(z)},A_{\rd},A_{\rm SN},A_{\rm pert},A_{\rm GW}\big),
\qquad
A_\alpha(\Act)\in\{0,1\},
\label{eq:pattern}
\end{equation}
where the baseline sector is understood to be always active. Every catalog element \(i\) is assigned to exactly one pattern \(\Act(i)\). If a broad theoretical family can realize more than one activation pattern over its parameter support, it must be split into pattern-labeled catalog elements before grouped inference is performed.

The model prior is factorized as
\begin{equation}
\pi(i)=\pi(\Act)\,\pi(i\mid \Act),
\qquad
\sum_{\Act\in\ActSpace}\pi(\Act)=1,
\qquad
\sum_{i\in\Cat_{\Act}}\pi(i\mid \Act)=1.
\label{eq:priorfactorization}
\end{equation}

\paragraph*{Assignment rule for \(\Act(i)\).}
Let \({\cal S}_i:\Vspace_i\rightarrow\prod_\alpha\mathbb R^{N_\alpha}\) denote the map from model parameters to the retained Fisher-normalized sector amplitudes introduced in Sec.~\ref{subsec:basis_priors}. For a candidate catalog element \(i\), define
\begin{equation}
A_\alpha(i)=
\begin{cases}
0, & \vect s_\alpha(\vartheta_i)=\vect 0
     \quad \text{for all }\vartheta_i\in\Vspace_i,\\
1, & \text{otherwise}.
\end{cases}
\label{eq:operational_activation_rule}
\end{equation}
The indicator is structural, not thresholded by detectability. If both values occur on different connected regions of the admissible parameter space, decompose
\begin{equation}
\Vspace_i=\bigsqcup_{\Act\in\ActSpace}\Vspace_{i,\Act},
\qquad
\Vspace_{i,\Act}\equiv
\{\vartheta_i\in\Vspace_i:\Act(\vartheta_i)=\Act\},
\label{eq:pattern_partition}
\end{equation}
and treat each nonempty \(\Vspace_{i,\Act}\) as a distinct catalog element. In this way the pattern map is fixed by the retained detection basis rather than by a verbal source-family label.

\begin{remark}[Structural activation versus detectability]
The indicator \(A_\alpha(i)\) records whether catalog element \(i\) admits a nonzero projection onto sector \(\alpha\) in the retained detection basis on any admissible region of parameter space. Small but allowed amplitudes therefore still correspond to \(A_\alpha=1\); the question of whether the data support activation is delegated to \(P_\alpha\) and \(B_\alpha(D)\).
\end{remark}

The pattern prior must be specified in the main text because posterior activation probabilities depend on it directly. A useful one-parameter family is
\begin{equation}
\pi_\lambda(\Act)
\propto
\pi_0(\Act)\,\exp[-\lambda K(\Act)],
\qquad
K(\Act)=\sum_{\alpha\in\{E,\rd,{\rm SN},{\rm pert},{\rm GW}\}}
A_\alpha(\Act),
\label{eq:patternprior_main}
\end{equation}
where \(\pi_0\) is a stated reference prior on the declared sector partition. The choice \(K=\sum_\alpha A_\alpha\) is not an invariant physical complexity; it is an illustrative sparsity penalty on a declared sector partition. Consequently, the
reporting standard in Sec.~\ref{subsec:sector_partition_prior} requires a pattern-neutral prior, \(\lambda=0\), together with sparsity scans rather than a single privileged default.

Within each activation pattern we recommend
\begin{equation}
\pi(i\mid \Act)
=
\frac{\omega_i}{\sum_{j\in\Cat_{\Act}}\omega_j},
\label{eq:withinpatternprior_main}
\end{equation}
with \(\omega_i=1\) in the absence of intentionally imposed theoretical preferences. The corresponding grouped evidence is
\begin{equation}
Z_{\Act}
\equiv
\sum_{i\in\Cat_{\Act}} Z_i\,\pi(i\mid \Act),
\label{eq:ZAct}
\end{equation}
and posterior mass among activation patterns is
\begin{equation}
p(\Act\mid D)
=
\frac{Z_{\Act}\,\pi(\Act)}
{\sum_{\Act'\in\ActSpace} Z_{\Act'}\,\pi(\Act')}.
\label{eq:patternposterior}
\end{equation}
The posterior for an individual catalog element becomes
\begin{equation}
p(i\mid D)
=
\frac{Z_i\,\pi(i\mid \Act)}{Z_{\Act}}\,p(\Act\mid D),
\qquad
i\in\Cat_{\Act},
\label{eq:modelposterior}
\end{equation}
while model-averaged posteriors follow in the usual way,
\begin{equation}
p(X\mid D)=\sum_{i\in\Cat}p(X\mid D,i)\,p(i\mid D).
\label{eq:modelavg}
\end{equation}

Two distinct sector-level summaries must then be separated. The first is the posterior probability that sector \(\alpha\) is active. The second is the grouped Bayes factor for activation against non-activation:
\begin{equation}
q_\alpha
\equiv
\Pr_\pi(A_\alpha=1)
=
\sum_{\Act:\,A_\alpha(\Act)=1}\pi(\Act),
\qquad
\pi(\Act\mid A_\alpha=a)
=
\frac{\pi(\Act)\,\one\!\{A_\alpha(\Act)=a\}}
{\sum_{\Act':\,A_\alpha(\Act')=a}\pi(\Act')},
\label{eq:sectorpriormass}
\end{equation}
\begin{equation}
B_\alpha(D)
\equiv
\frac{
\sum_{\Act:\,A_\alpha(\Act)=1}
Z_{\Act}\,\pi(\Act\mid A_\alpha=1)
}{
\sum_{\Act:\,A_\alpha(\Act)=0}
Z_{\Act}\,\pi(\Act\mid A_\alpha=0)
},
\label{eq:sectorBF}
\end{equation}
\begin{equation}
P_\alpha
\equiv
p(A_\alpha=1\mid D)
=
\sum_{\Act\in\ActSpace} A_\alpha(\Act)\,p(\Act\mid D),
\qquad
\logit P_\alpha=\ln B_\alpha(D)+\logit q_\alpha.
\label{eq:sectorodds}
\end{equation}
The quantity \(P_\alpha\) is the posterior activation probability under the adopted pattern prior, whereas \(\ln B_\alpha(D)\) isolates the data-driven support for activation after factoring out the marginal prior odds \(q_\alpha/(1-q_\alpha)\). Both should be reported.

Eqs.~(\ref{eq:patternposterior})--(\ref{eq:sectorodds}) define an exact coarse-graining of ordinary Bayesian model averaging from model labels to sector-activation events. At sector level, the resulting summaries are invariant to within-pattern catalog multiplicity while model-level posteriors remain available for projection onto specific theory realizations. The DESI-specific content is the physically motivated quotient map from a cosmological model catalog to predeclared activation events and the associated validation criteria for using that map in current analyses.

\subsection{Quotient-space interpretation and refinement invariance}
\label{subsec:quotient_invariance}

Let \(\sim\) be the equivalence relation on the catalog \(\Cat\) defined by
\begin{equation}
i\sim j
\qquad\Longleftrightarrow\qquad
\Act(i)=\Act(j).
\end{equation}
The pattern space \(\ActSpace\) is then the quotient \(\Cat/\!\sim\). Under the factorized prior \(\pi(i)=\pi(\Act)\pi(i\mid \Act)\), the posterior on the quotient is exactly
\begin{equation}
p(\Act\mid D)=\sum_{i\in\Cat_{\Act}}p(i\mid D),
\label{eq:quotient_bma}
\end{equation}
which follows immediately from Eq.~(\ref{eq:modelposterior}).

The sector activation probability is a posterior inclusion probability for the event \(A_\alpha=1\):
\begin{equation}
P_\alpha
=
\sum_{i\in\Cat}
\one\!\{A_\alpha[\Act(i)]=1\}\,p(i\mid D).
\label{eq:sector_inclusion}
\end{equation}
Likewise, pairwise co-activation probabilities are
\begin{equation}
P_{\alpha\beta}
\equiv
p(A_\alpha=1,A_\beta=1\mid D)
=
\sum_{\Act\in\ActSpace}
A_\alpha(\Act)A_\beta(\Act)\,p(\Act\mid D).
\label{eq:pairwise_activation}
\end{equation}
The quantities \(P_{\alpha\beta}\) should be reported whenever two sectors are observationally degenerate, in particular \((\Theta_{E(z)},\Theta_{\rm SN})\) and \((\Theta_{E(z)},\Theta_{\rd})\). The activation-pattern layer is required
because sector-level conclusions should be insensitive to arbitrary refinements of the catalog.

\begin{proposition}[Prior-preserving refinement invariance]
\label{prop:grouped_invariance}
Let \(r:\widetilde{\Cat}\to\Cat\) be a surjective refinement map such that \(\Act(\widetilde i)=\Act(r(\widetilde i))\). Suppose that for every \(i\in\Cat_{\Act}\),
\begin{equation}
\sum_{\widetilde i\in r^{-1}(i)}
\widetilde\pi(\widetilde i\mid \Act)\,Z_{\widetilde i}
=
\pi(i\mid \Act)\,Z_i.
\label{eq:refinement_map_condition}
\end{equation}
Then all quotient-level posteriors are invariant:
\begin{equation}
\widetilde p(\Act\mid D)=p(\Act\mid D),
\qquad
\widetilde P_\alpha=P_\alpha,
\qquad
\widetilde P_{\alpha\beta}=P_{\alpha\beta},
\qquad
\widetilde B_\alpha(D)=B_\alpha(D).
\label{eq:refinement_map_result}
\end{equation}
\end{proposition}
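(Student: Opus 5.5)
The plan is to show that the grouped evidences \(Z_{\Act}\) are unchanged by the refinement. Every quantity in Eq.~(\ref{eq:refinement_map_result}) is built from \(Z_{\Act}\) and the pattern prior \(\pi(\Act)\) alone. So I will first argue that the refined catalog carries the same pattern prior, \(\widetilde\pi(\Act)=\pi(\Act)\). The refinement acts only on the within-pattern factor \(\pi(i\mid\Act)\) of the factorization in Eq.~(\ref{eq:priorfactorization}), and the pattern space \(\ActSpace\) is the same for both catalogs. This holds because \(r\) is surjective and preserves \(\Act(\cdot)\), so \(\widetilde\Cat_{\Act}=r^{-1}(\Cat_{\Act})\) is nonempty exactly when \(\Cat_{\Act}\) is. I will state this as part of what ``prior-preserving'' means.

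The key step is a regrouping of the sum in Eq.~(\ref{eq:ZAct}) over fibres of \(r\). Because \(\Act(\widetilde i)=\Act(r(\widetilde i))\), the refined pattern class is the disjoint union of fibres:
\begin{equation*}
\widetilde\Cat_{\Act}=\bigsqcup_{i\in\Cat_{\Act}} r^{-1}(i).
\end{equation*}
Therefore
\begin{equation*}
\widetilde Z_{\Act}=\sum_{\widetilde i\in\widetilde\Cat_{\Act}}\widetilde\pi(\widetilde i\mid\Act)Z_{\widetilde i}
=\sum_{i\in\Cat_{\Act}}\sum_{\widetilde i\in r^{-1}(i)}\widetilde\pi(\widetilde i\mid\Act)Z_{\widetilde i}
=\sum_{i\in\Cat_{\Act}}\pi(i\mid\Act)Z_i=Z_{\Act}.
\end{equation*}
The third equality is Eq.~(\ref{eq:refinement_map_condition}), applied fibre by fibre.

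Substituting \(\widetilde Z_{\Act}=Z_{\Act}\) and \(\widetilde\pi(\Act)=\pi(\Act)\) into Eq.~(\ref{eq:patternposterior}) gives \(\widetilde p(\Act\mid D)=p(\Act\mid D)\). The inclusion probabilities \(P_\alpha\) in Eq.~(\ref{eq:sectorodds}) and the co-activation probabilities \(P_{\alpha\beta}\) in Eq.~(\ref{eq:pairwise_activation}) are fixed linear functionals of \(p(\Act\mid D)\), with coefficients \(A_\alpha(\Act)\) and \(A_\alpha(\Act)A_\beta(\Act)\) that depend only on the pattern. Hence both are invariant. For \(B_\alpha(D)\), the conditional pattern priors \(\pi(\Act\mid A_\alpha=a)\) in Eq.~(\ref{eq:sectorpriormass}) depend only on \(\pi(\Act)\). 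Equation~(\ref{eq:sectorBF}) is then a ratio of sums of unchanged \(Z_{\Act}\) with unchanged weights, so \(B_\alpha\) is invariant as well. As a consistency check, the same conclusion follows from \(\logit P_\alpha=\ln B_\alpha+\logit q_\alpha\) with \(q_\alpha\) unchanged.

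The main obstacle is not computational but definitional. Condition~(\ref{eq:refinement_map_condition}) alone only fixes the within-pattern products \(\widetilde\pi(\widetilde i\mid\Act)Z_{\widetilde i}\). The pattern prior \(\widetilde\pi(\Act)\), and the agreement of the refined and original pattern spaces, must be taken as part of the refinement hypothesis. I will make this assumption explicit, noting that it is the natural reading since only the catalog and not the declared sector partition is refined. Apart from that, the regrouping over fibres does all the work, and it needs only that the fibres \(r^{-1}(i)\) partition \(\widetilde\Cat_{\Act}\), which follows from \(r\) being a pattern-preserving function.
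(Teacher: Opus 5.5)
Your proof is correct and matches the paper's: you regroup the refined sum over the fibres \(r^{-1}(i)\) to get \(\widetilde Z_{\Act}=Z_{\Act}\), then note that \(p(\Act\mid D)\), \(P_\alpha\), \(P_{\alpha\beta}\), and \(B_\alpha(D)\) depend on the catalog only through \(\{Z_{\Act},\pi(\Act)\}\). The paper also relies on \(\widetilde\pi(\Act)=\pi(\Act)\) but leaves it implicit in the phrase ``prior-preserving,'' so stating it explicitly is a useful clarification rather than a different route.
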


\begin{proof}[Proof sketch]
Eqs.~(\ref{eq:patternposterior})--(\ref{eq:sectorodds}) depend on the catalog only through the grouped quantities \(\{Z_{\Act},\pi(\Act)\}_{\Act\in\ActSpace}\). If a sector-preserving refinement leaves those grouped quantities unchanged,
then \(p(\Act\mid D)\), \(P_\alpha\), and \(B_\alpha(D)\) are unchanged as well. A one-line derivation is recorded in Appendix~\ref{app:technical}. 
\end{proof}

\begin{remark}[Exactness versus specification dependence]
The grouped identities derived in Eqs.~(\ref{eq:patternposterior})--
(\ref{eq:sectorodds}) are exact conditional on four inputs: the pattern map \(i\mapsto \Act(i)\), the admissible catalog \(\Cat\), the grouped prior structure \(\{\pi(\Act),\pi(i\mid \Act)\}\), and the validated likelihood or compression used to compute the evidences \(Z_i\). The quotient construction
removes arbitrary within-pattern multiplicity and keeps the remaining specification choices explicit: sector partition, basis choice, prior widths, and likelihood approximation. Those quantities are probed by the robustness program
of Sec.~\ref{sec:validation}.
\end{remark}

\subsection{Detection basis and standardized sector priors}
\label{subsec:basis_priors}

To separate sector identification from microphysical interpretation, we parameterize the non-baseline sectors in a low-rank detection basis. The corresponding amplitudes \(s_\alpha\) define the coordinates used both in inference and in robustness studies:
\begin{align}
\delta\ln E(z)
&=
\sum_{a=1}^{N_E} c_a^{E}\,\phi_a^{E}(z),
\qquad
\delta\ln \rd = c_{\rd},
\qquad
\delta\mu_{\rm SN}(z)
=
\Delta m_B + \sum_{a=1}^{N_{\rm SN}} q_a\,\psi_a^{\rm SN}(z),
\label{eq:detbasis1}
\\
\mu(z,k)-1
&=
\sum_{a=1}^{N_\mu} c_a^\mu\,\phi_a^\mu(z,k),
\qquad
\Sigma(z,k)-1
=
\sum_{a=1}^{N_\Sigma} c_a^\Sigma\,\phi_a^\Sigma(z,k),
\qquad
\ln \Xi(z)
=
\sum_{a=1}^{N_\Xi} c_a^\Xi\,\phi_a^\Xi(z).
\label{eq:detbasis2}
\end{align}
The basis functions should be chosen to be orthonormal with respect to the fiducial Fisher metric of the corresponding compressed data block, and the retained rank should satisfy a fixed capture rule. Microphysical model families are then interpreted as constrained submanifolds of this detection basis rather than as the primary detection space itself.

The data sensitivity is organized as follows. BAO shape carries the leading discrimination power for \(\Theta_{E(z)}\), the sound-horizon amplitude for \(\Theta_{\rd}\), explicit calibration modes for \(\Theta_{\rm SN}\), low-rank growth/lensing amplitudes for \(\Theta_{\rm pert}\), and standard-siren propagation amplitudes for \(\Theta_{\rm GW}\). CMB lensing is
part of the perturbation data block because it reconstructs the projected matter distribution and is routinely used as a growth and lensing probe \cite{LewisChallinor2006,Planck2018Lensing}. GW polarization and strong-field observables can be included as additional consistency checks when the corresponding likelihoods are part of the analysis, but they are not required for
the minimal sector partition. Table~\ref{tab:detectionbasis} summarizes the minimal detection basis and its principal data support.

\begin{table}[!tbp]
\caption{Minimal detection basis used for sector-resolved inference. Here BAO denotes baryon acoustic oscillations, SN supernovae, CMB the cosmic microwave background, FS full-shape clustering, RSD redshift-space distortions, WL weak lensing, and \(E_G\) the large-scale gravity-consistency statistic. Detailed model-family realizations are deferred to Appendix~\ref{app:illustrative_catalog}.}
\label{tab:detectionbasis}
\tablefont
\tighttable
\begin{tabular}{@{}llll@{}}
\toprule
\tcell{0.18\textwidth}{Sector} &
\tcell{0.24\textwidth}{Detection amplitude} &
\tcell{0.26\textwidth}{Principal role} &
\tcell{0.25\textwidth}{Primary data block} \\
\midrule
\tcell{0.18\textwidth}{Late-time background} &
\tcell{0.24\textwidth}{\(\delta\ln E(z)\) modes or \((w_0,w_a)\)} &
\tcell{0.26\textwidth}{Background-shape activation} &
\tcell{0.25\textwidth}{BAO shape, SN, primary-CMB distance anchors} \\
\tcell{0.18\textwidth}{Early-time ruler} &
\tcell{0.24\textwidth}{\(\delta\ln\rd\)} &
\tcell{0.26\textwidth}{Sound-horizon shift} &
\tcell{0.25\textwidth}{BAO scale, primary CMB acoustic scale} \\
\tcell{0.18\textwidth}{SN systematics} &
\tcell{0.24\textwidth}{\(\Delta m_B\), \(q_a\) modes} &
\tcell{0.26\textwidth}{Calibration/selection structure} &
\tcell{0.25\textwidth}{SN Hubble diagram} \\
\tcell{0.18\textwidth}{Perturbations} &
\tcell{0.24\textwidth}{\(\mu-1\), \(\Sigma-1\), \(\eta-1\) low-rank modes} &
\tcell{0.26\textwidth}{Growth/slip activation} &
\tcell{0.25\textwidth}{FS/RSD, WL, CMB lensing, CMB-lensing--galaxy cross-correlations, \(E_G\)} \\
\tcell{0.18\textwidth}{GW propagation} &
\tcell{0.24\textwidth}{\(\ln \Xi(z)\) modes or \((\Xi_0,n)\)} &
\tcell{0.26\textwidth}{Tensor-propagation activation} &
\tcell{0.25\textwidth}{Standard sirens} \\
\bottomrule
\end{tabular}
\end{table}

For sector \(\alpha\), let \(F_\alpha \equiv J_\alpha^T C^{-1}J_\alpha\) be the fiducial Fisher operator of
the compressed block. An admissible detection basis
\(\{\phi_a^\alpha\}_{a=1}^{N_\alpha}\) should satisfy
\begin{equation}
\langle \phi_a^\alpha,\phi_b^\alpha\rangle_{F_\alpha}=\delta_{ab},
\qquad
\langle u,v\rangle_{F_\alpha}\equiv u^T F_\alpha v,
\label{eq:fisher_orthonormality}
\end{equation}
and the retained rank \(N_\alpha\) should obey a fixed capture rule,
\begin{equation}
\frac{\sum_{a=1}^{N_\alpha}\lambda_a^{(\alpha)}}
{\sum_{a\ge 1}\lambda_a^{(\alpha)}} \ge q,
\qquad
q\in(0,1),
\label{eq:capture_rule}
\end{equation}
where \(\lambda_a^{(\alpha)}\) are the ordered eigenvalues of \(F_\alpha\). Basis-robustness tests should compare only bases satisfying Eqs.~(\ref{eq:fisher_orthonormality})--(\ref{eq:capture_rule}). Once the admissible detection basis has been fixed, the active sector amplitudes can be assigned a Fisher-isotropic reference prior.

Because the retained detection modes are orthonormal with respect to the fiducial Fisher metric, the reference prior on the active amplitudes is isotropic in Fisher-normalized coordinates. For sector \(\alpha\), let
\(\vect s_\alpha\in\mathbb R^{N_\alpha}\) collect the retained mode amplitudes.
We then define
\begin{equation}
\pi(\vect s_\alpha\mid A_\alpha=1,\sigma_\alpha)
=
\mathcal N\!\left(\vect 0,\sigma_\alpha^2 I_{N_\alpha}\right),
\qquad
\pi(\vect s_\alpha\mid A_\alpha=0)
=
\delta^{(N_\alpha)}(\vect s_\alpha).
\label{eq:sector_gaussian_prior}
\end{equation}
The hyperparameter \(\sigma_\alpha\) is the active-sector prior width. Prior robustness is assessed by the one-parameter deformation \(\sigma_\alpha\mapsto \rho\,\sigma_\alpha\), which induces the diagnostics \(\eta_\alpha(\rho)\) of Appendix~\ref{app:robustness}.

Under the local quadratic approximation to the compressed likelihood, the sector amplitudes are measured in units of Fisher curvature. In particular,
\begin{equation}
\Delta\chi^2_\alpha
\;\simeq\;
\vect s_\alpha^{\,T} F_\alpha \vect s_\alpha
\;=\;
\vect s_\alpha^{\,T}\vect s_\alpha
\qquad
\text{(Fisher-normalized basis)},
\label{eq:local_chi2_fisher}
\end{equation}
so the isotropic prior of Eq.~(\ref{eq:sector_gaussian_prior}) implies
\begin{equation}
\mathbb E\!\left[\Delta\chi^2_\alpha\mid A_\alpha=1,\sigma_\alpha\right]
=
N_\alpha \sigma_\alpha^2,
\qquad
\mathrm{Var}\!\left[\Delta\chi^2_\alpha\mid A_\alpha=1,\sigma_\alpha\right]
=
2N_\alpha \sigma_\alpha^4.
\label{eq:prior_scale_fisher}
\end{equation}
The width \(\sigma_\alpha\) fixes the prior scale in units of the local curvature of the compressed likelihood. The standardized prior is the coordinate-neutral prior induced by the adopted detection metric.

\subsection{Minimal sector-resolved likelihood specification}
\label{subsec:likelihood}

The BAO and supernova blocks are written in greater detail because they define the dominant current degeneracy between late-time background, ruler, and calibration sectors. The perturbation and tensor sectors are specified only to the level required for a minimal sector-resolved estimator.

Formally, the starting point is the full compressed-data likelihood
\begin{equation}
-2\ln \Ltot
=
(\vect d-\vect \mu_{\rm th})^T C^{-1} (\vect d-\vect \mu_{\rm th})
+\text{const},
\label{eq:full_likelihood}
\end{equation}
where \(\vect d\) collects all compressed observables and \(C\) is their
validated joint covariance. The factorized form
\begin{equation}
\Ltot
=
\Lth_{\rm BAO}\,
\Lth_{\rm SN}\,
\Lth_{\rm CMB}\,
\Lth_{\rm FS/RSD/WL}\,
\Lth_{\rm GW,prop}\,
\Lth_{\rm aux},
\label{eq:Ltot}
\end{equation}
is therefore an approximation whose validity is controlled by the compressed covariance structure. Here \(\Lth_{\rm aux}\) denotes additional consistency likelihoods, including GW polarization and strong-field observables, that are not required for the minimal sector-resolved likelihood specification.

For BAO, the scientifically relevant split should be introduced in the variables used by the observational community. Let the compressed anisotropic BAO vector at redshifts \(z_i\) be
\begin{equation}
\vect d=
\begin{pmatrix}
\vect d_\perp\\
\vect d_\parallel
\end{pmatrix},
\qquad
\vect d_\perp=\left(\ln\frac{\DM(z_i)}{\rd}\right)_{i=1}^N,
\qquad
\vect d_\parallel=\left(\ln\frac{\DHu(z_i)}{\rd}\right)_{i=1}^N,
\label{eq:BAOd}
\end{equation}
where \(\DHu(z)\equiv c/H(z)\). Define the AP-shape and isotropic-scale variables
\begin{equation}
\vect y\equiv \vect d_\perp-\vect d_\parallel,
\qquad
\vect v\equiv \frac{2\vect d_\perp+\vect d_\parallel}{3}.
\label{eq:yv_def_basic}
\end{equation}
Then
\begin{equation}
y_i=\ln\frac{\DM(z_i)}{\DHu(z_i)}=\ln \FAP(z_i),
\qquad
v_i=\ln\left(\frac{\DV(z_i)}{\rd}\right)-\frac{1}{3}\ln z_i,
\label{eq:bao_ap_dv}
\end{equation}
with
\begin{equation}
\DV(z)\equiv [z\DM^2(z)\DHu(z)]^{1/3}.
\label{eq:Dv_definition}
\end{equation}
The additive \(-(1/3)\ln z_i\) in Eq.~(\ref{eq:bao_ap_dv}) is known and drops out of residuals. Thus \(\vect y\) is the standard Alcock--Paczynski variable, while \(\vect v\) is the conventional isotropic BAO scale \(\DV/\rd\) written in logarithmic residual form \cite{AlcockPaczynski1979,Eisenstein2005}.

Let
\begin{equation}
\mathcal T_{yv}
\equiv
\begin{pmatrix}
I_N & -I_N\\
\frac{2}{3}I_N & \frac{1}{3}I_N
\end{pmatrix},
\qquad
\begin{pmatrix}\vect y\\ \vect v\end{pmatrix}
=
\mathcal T_{yv}\vect d,
\qquad
C_{yv}=\mathcal T_{yv} C_d \mathcal T_{yv}^{T},
\label{eq:Tyv_Cyv}
\end{equation}
where \(C_d\) is the covariance of \(\vect d\). Define residuals
\begin{equation}
\Delta\vect y\equiv \vect y-\vect y_{\rm th},
\qquad
\Delta\vect v\equiv \vect v-\vect v_{\rm th}.
\label{eq:bao_residuals}
\end{equation}
The BAO likelihood is the joint Gaussian
\begin{equation}
-2\ln\Lth_{\rm BAO}
=
\begin{pmatrix}
\Delta\vect y\\
\Delta\vect v
\end{pmatrix}^{T}
C_{yv}^{-1}
\begin{pmatrix}
\Delta\vect y\\
\Delta\vect v
\end{pmatrix}.
\label{eq:LBAOjoint}
\end{equation}
The BAO shape/scale split is therefore a decomposition of information content, not an assumption of independence unless the off-diagonal block of \(C_{yv}\) is negligible.

The sensitivity of these variables to a homogeneous sound-horizon rescaling is
\begin{equation}
\frac{\partial \vect y_{\rm th}}{\partial \ln \rd}=\vect 0,
\qquad
\frac{\partial \vect v_{\rm th}}{\partial \ln \rd}=-\one,
\qquad
\frac{\partial \vect v_{\rm th}}{\partial c_a^E}\neq \vect 0
\quad \text{in general}.
\label{eq:bao_sensitivity}
\end{equation}
Thus \(\vect y\) is exactly ruler-free, whereas \(\vect v\) is ruler-sensitive but not ruler-exclusive. BAO alone does not identify \(\Theta_{\rd}\) without external anchors that constrain the late-time background.

The pure-ruler diagnostic must be defined from the null covariance of the scale-like block before it is interpreted. Let \(C_v\) be the lower-right \(N\times N\) block of \(C_{yv}\). The \(C_v^{-1}\)-optimal projection of the scale residual onto the pure-ruler direction is
\begin{equation}
a_{\rd}
\equiv
-\frac{\one^T C_v^{-1}\Delta\vect v}{\one^T C_v^{-1}\one},
\qquad
\Delta\vect v_\perp
\equiv
\Delta\vect v+a_{\rd}\one,
\qquad
\one^T C_v^{-1}\Delta\vect v_\perp=0.
\label{eq:rd_projection}
\end{equation}
Under the null covariance,
\begin{equation}
{\rm Var}(a_{\rd})=(\one^T C_v^{-1}\one)^{-1},
\qquad
\Delta\chi^2_{\rd,{\rm BAO}}=a_{\rd}^{2}\,\one^T C_v^{-1}\one.
\label{eq:bao_ruler_diagnostic}
\end{equation}
A pure-ruler interpretation is supported only when the likelihood gain is localized in \(a_{\rd}\), while \(\vect y\) and \(\Delta\vect v_\perp\) remain statistically stable. If either the AP block or the scale residual orthogonal to the pure-ruler direction moves appreciably, the support is not for \(\Theta_{\rd}\) alone; it necessarily feeds the late-time geometric sector, and curvature if that degree of freedom is open. This convention also makes the internal BAO diagnostic directly comparable to compressed analyses that use \(\FAP\) and \(\DV/\rd\); recent studies indicate that such compressed analyses remain robust across broad classes of non-\lcdm\ and modified-gravity scenarios \cite{Bernal2020BAORobustness,Pan2024CompressedBAO}.

The supernova block must keep the leading calibration directions inside the global inference. A convenient formulation writes the residual vector as
\begin{equation}
\vect r_\mu
\equiv
\vect\mu_{\rm obs}-\vect\mu_{\rm th}(\Theta)
=
M\vect q+\vect\epsilon,
\qquad
\vect\epsilon\sim\mathcal N(\vect 0,C_\mu),
\label{eq:rmu}
\end{equation}
with \(M\) spanning the absolute-magnitude calibration direction and whitened systematic eigenmodes. If the nuisance prior is Gaussian, \(\vect q\sim\mathcal N(\vect 0,\Pi_q)\), then analytic marginalization yields
\begin{equation}
-2\ln\Lth_{\rm SN}
=
\vect r_\mu^{\,T} C_{\rm eff}^{-1} \vect r_\mu
+
\ln\det C_{\rm eff}
+\text{const},
\qquad
C_{\rm eff}=C_\mu+M\Pi_q M^T.
\label{eq:LSNmarg}
\end{equation}
This form is preferable for evidence calculations because the calibration sector remains inside the model comparison rather than being absorbed into preprocessing. That choice is particularly important in the DESI era, where recent reanalyses have shown that the apparent evidence for evolving dark energy can depend materially on supernova calibration structure
\cite{DESDovekie2025,HuangCaiWang2025,OngYallupHandley2026}.

The same point can be made explicitly for the SN block.
Under the Gaussian hierarchical model above, the posterior mean calibration amplitude is
\begin{equation}
\widehat{\vect q}
=
\left(M^{T}C_\mu^{-1}M+\Pi_q^{-1}\right)^{-1}
M^{T}C_\mu^{-1} r_\mu,
\label{eq:sn_posterior_mean}
\end{equation}
with calibration-explained component \(M\widehat{\vect q}\) and orthogonal
residual
\begin{equation}
r_\mu^{\perp}
\equiv
r_\mu - M\widehat{\vect q}.
\label{eq:sn_orthogonal_residual}
\end{equation}
This is what turns the marginalized SN likelihood into a sector diagnostic. If the improvement obtained by activating \(\Theta_{\rm SN}\) is carried predominantly by \(M\widehat{\vect q}\) while \(r_\mu^{\perp}\) remains statistically stable, the support should accrue to the calibration sector rather than to \(\Theta_{E(z)}\). Conversely, a persistent structured shift in \(r_\mu^{\perp}\) after the leading calibration modes have been activated is evidence that the anomaly is not exhausted by low-rank SN systematics.

In practice, the calibration design matrix should be constructed from the survey systematic covariance. Writing
\begin{equation}
C_{\rm sys}=U\Lambda U^T,
\end{equation}
we take
\begin{equation}
M=[\one,\ E_r],
\qquad
E_r \equiv U_r \Lambda_r^{1/2},
\label{eq:sn_mode_construction}
\end{equation}
where \(U_r\) contains the first \(r\) eigenvectors and \(r\) is the smallest rank satisfying
\begin{equation}
\frac{\sum_{i=1}^{r}\lambda_i}{\sum_{i\ge 1}\lambda_i}\ge 0.95.
\label{eq:sn_capture}
\end{equation}
The nuisance prior is then standardized as
\(\vect q\sim\mathcal N(\vect 0,I_{r+1})\), which makes the supernova sector definition reproducible and separates calibration rank choice from cosmological parameterization.

The same logic applies to perturbation and tensor-sector inference. Growth and lensing are represented in a low-rank phenomenological basis \(\{\mu,\Sigma,\eta\}\). The minimal tensor-sector block is GW propagation; GW polarization and strong-field observables may be added as consistency likelihoods when available. The key point is not the unique choice of basis, but the separation of sector identification from microphysical interpretation.

\subsection{Numerical and methodological requirements}
\label{subsec:implementation}

An analysis using the grouped construction separates two computational tasks. The marginal-likelihood calculations dominate the numerical cost; once the catalog evidences have been computed, the quotient aggregation in Eq.~(\ref{eq:aggregation_cost}) has negligible computational cost. The demanding step is the construction and validation of a pattern-labeled catalog whose inactive sectors are inactive at the likelihood level. Numerically, log-evidence uncertainties must be propagated to grouped quantities. The parameter-to-amplitude maps, activation switches, detection bases, likelihood blocks, and emulator interfaces must be defined consistently across all catalog elements.

Because posterior sector odds and grouped Bayes factors are nonlinear functions of the catalog evidences, numerical log-evidence errors must be propagated to grouped quantities. Let
\(\ell_i\equiv \ln Z_i\). Then
\begin{equation}
\ln Z_{\Act}
=
\log\sum_{i\in\Cat_{\Act}}\pi(i\mid \Act)\,e^{\ell_i},
\qquad
\frac{\partial \ln Z_{\Act}}{\partial \ell_i}
=
\rho_{i\mid \Act}
\equiv
\frac{\pi(i\mid \Act)Z_i}{Z_{\Act}}.
\label{eq:grouped_weights}
\end{equation}
If \(\Sigma_{\Act}^{(\ell)}\) is the covariance of the vector of evidence estimates \(\{\hat\ell_i\}_{i\in\Cat_{\Act}}\), the delta method gives
\begin{equation}
{\rm Var}(\widehat{\ln Z_{\Act}})
\simeq
\vect \rho_{\Act}^{\,T}\Sigma_{\Act}^{(\ell)}\vect\rho_{\Act}.
\label{eq:var_lnZAct}
\end{equation}
The same linearization yields a closed-form grouped-Bayes-factor uncertainty. For sector \(\alpha\) define
\begin{align}
u_{\Act}^{+,\alpha}
&=
\frac{\one\!\{A_\alpha(\Act)=1\}\,Z_{\Act}\pi(\Act\mid A_\alpha=1)}
{\sum_{\Act':A_\alpha(\Act')=1}Z_{\Act'}\pi(\Act'\mid A_\alpha=1)},\\
u_{\Act}^{-,\alpha}
&=
\frac{\one\!\{A_\alpha(\Act)=0\}\,Z_{\Act}\pi(\Act\mid A_\alpha=0)}
{\sum_{\Act':A_\alpha(\Act')=0}Z_{\Act'}\pi(\Act'\mid A_\alpha=0)},
\label{eq:sector_lnB_evidence_weights}
\end{align}
and
\begin{equation}
g_{\Act}^{(\alpha)}=
u_{\Act}^{+,\alpha}-
u_{\Act}^{-,\alpha}.
\label{eq:sector_lnB_gradient}
\end{equation}
If \(\Sigma_{\ActSpace}^{(\ell)}\) is the covariance matrix of the pattern-level log-evidence estimates, including any common Monte Carlo or emulator errors, then
\begin{equation}
{\rm Var}(\widehat{\ln B_\alpha})
\simeq
\sum_{\Act,\Act'}
g_{\Act}^{(\alpha)}
\Sigma_{\Act\Act'}^{(\ell)}
g_{\Act'}^{(\alpha)}.
\label{eq:lnB_error_propagation}
\end{equation}
For independent nested-sampling runs the matrix is diagonal, but Eq.~(\ref{eq:lnB_error_propagation}) is the safer reporting formula whenever two catalog elements share emulators, nuisance calibrations, covariance estimates, or importance samples. For an analysis using this construction we recommend
\begin{equation}
\sigma(\ln B_\alpha)<0.3,
\qquad
\max_{r,r'}\left|
\widehat{\ln B_\alpha}^{\,(r)}-
\widehat{\ln B_\alpha}^{\,(r')}
\right|<0.2,
\label{eq:evidence_precision_targets}
\end{equation}
where \(r\) labels at least three independent sampling runs with different random seeds. These targets are deliberately more stringent than the usual ``substantial evidence'' scale: numerical uncertainty in \(\ln B_\alpha\) should be small compared with the evidence differences used to rank competing activation patterns.

The marginal-likelihood calculation is part of the statistical specification. Nested sampling is a natural choice because it returns both posterior samples and estimates of \(Z_i\), but the quotient construction only requires calibrated marginal likelihoods and their numerical uncertainties. For every catalog element an analysis run should report the sampler, stopping criterion, number of live points or equivalent resolution parameter, posterior effective sample size, random seed, \(\widehat{\ln Z_i}\), \(\sigma(\ln Z_i)\), and at least one independent rerun. In multimodal or strongly curved degeneracy cases, representative evidences should be checked with an independent sampling
configuration, for example MultiNest, PolyChord, or dynesty
\cite{Skilling2006Nested,Feroz2009MultiNest,Handley2015PolyChord,Speagle2020Dynesty}. If two validated configurations differ by
\begin{equation}
|\Delta\ln Z_i|>\max\{0.5,2\sigma_{\Delta\ln Z_i}\},
\label{eq:evidence_estimator_disagreement}
\end{equation}
then the affected grouped Bayes factors and posterior sector odds should be treated as numerically unresolved and not used for a physical sector claim until the discrepancy is resolved.

For \(K\) non-baseline sectors the maximal number of activation patterns is \(|\ActSpace|\le 2^K\). The five-sector partition used here has \(K=5\) and hence at most \(32\) patterns. The geometrically dominant DESI subproblem
\((\Theta_{E(z)},\Theta_{\rd},\Theta_{\rm SN})\) has only \(2^3=8\) patterns. The number of catalog evidences is
\begin{equation}
N_Z=\sum_{\Act\in\ActSpace}|\Cat_{\Act}|,
\qquad
N_{\rm like}^{\rm tot}
\simeq
\sum_{\Act\in\ActSpace}\sum_{i\in\Cat_{\Act}}N_{\rm like}^{(i,\Act)},
\label{eq:implementation_cost}
\end{equation}
where \(N_{\rm like}^{(i,\Act)}\) is the number of likelihood calls required by the evidence calculation for catalog element \(i\) in pattern \(\Act\). The aggregation step
\begin{equation}
Z_{\Act}=\sum_{i\in\Cat_{\Act}}Z_i\pi(i\mid\Act)
\label{eq:aggregation_cost}
\end{equation}
is \(O(N_Z)\) and is negligible compared with the evidence evaluations.

A staged analysis is preferable to beginning with the full five-sector catalog.
The recommended sequence is
\begin{enumerate}
\item geometry-only:
\((\Theta_{E(z)},\Theta_{\rd},\Theta_{\rm SN})\), at most eight patterns;
\item geometry plus perturbations:
\((\Theta_{E(z)},\Theta_{\rd},\Theta_{\rm SN},\Theta_{\rm pert})\), at most
sixteen patterns;
\item full five-sector catalog including GW propagation, at most thirty-two
patterns.
\end{enumerate}
The first stage is already sufficient to address the principal DESI--CMB--SN geometric degeneracy. Later stages should be added only after null, single-sector, and mixed-sector benchmark calibration has been passed for the preceding stage.

Existing Boltzmann, emulator, and likelihood codes need not be rewritten as a single trans-dimensional code. The sector partition can be imposed through explicit sector switches and likelihood interfaces within existing codes such as Cobaya, CAMB, CLASS, EFTCAMB, and hi\_class
\cite{TorradoLewis2021Cobaya,Lewis2000CAMB,Blas2011CLASS,Hu2014EFTCAMB,
Zumalacarregui2017HICLASS}. Each catalog element should expose the parameter-to-amplitude map
\begin{equation}
{\cal S}_i(\vartheta_i)
=
\{\vect s_E,\,s_{\rd},\,\vect s_{\rm SN},\,\vect s_{\rm pert},\,
\vect s_{\rm GW}\},
\label{eq:sector_adapter}
\end{equation}
and the activation pattern must be assigned by structural support:
\begin{equation}
A_\alpha(i)=0
\quad\Longleftrightarrow\quad
\vect s_\alpha(\vartheta_i)=\vect 0
\quad\forall\,\vartheta_i\in\Vspace_i.
\label{eq:structural_support_implementation}
\end{equation}
For high-dimensional parameterizations this condition should be enforced by explicit sector switches, not by post-processing. For \(A_{E(z)}=0\), the background expansion module is fixed to the baseline history. For \(A_{\rd}=0\), the early-time sound-horizon module is fixed to its baseline value. For
\(A_{\rm SN}=0\), the calibration design matrix is excluded or assigned zero prior width. For \(A_{\rm pert}=0\), modified-growth and lensing amplitudes are zero in both transfer-function generation and LSS prediction, including CMB
lensing if that block is present. For \(A_{\rm GW}=0\), the GW propagation law is fixed to the GR luminosity-distance relation.

Inactive-sector leakage should be tested in Fisher-normalized units. For a catalog element with \(A_\alpha(i)=0\), define
\begin{equation}
\epsilon_{\alpha i}^{\rm leak}
\equiv
\max_{\vartheta_i\in\Vspace_i}
\left(\vect s_\alpha^{\,T}\vect s_\alpha\right)^{1/2}.
\label{eq:sector_leakage}
\end{equation}
Validation runs should require
\begin{equation}
\epsilon_{\alpha i}^{\rm leak}<10^{-3},
\qquad
\Delta\chi^2_{\alpha,{\rm leak}}<10^{-6},
\qquad
|\Delta\ln Z|_{\rm leak}<0.05
\label{eq:leakage_targets}
\end{equation}
on null mocks when inactive-sector settings are toggled. The first two criteria follow from the Fisher normalization in Eq.~(\ref{eq:local_chi2_fisher}); the last is a validation target for the likelihood calculation. These inactive-sector tests are
stricter than the reported evidence precision in  Eq.~(\ref{eq:evidence_precision_targets}) because they diagnose catalog integrity, not statistical support.

The analysis proceeds in five steps. First, construct a pattern-labeled catalog and specify \(\pi(\Act)\) together with
\(\pi(i\mid \Act)\). Second, choose admissible detection bases and construct explicit sector switches that enforce inactive-sector boundaries. Third, compute the catalog evidences \(Z_i\) with a common likelihood specification and 
marginal-likelihood settings, then aggregate them to
\(Z_{\Act}\). Fourth, report \(p(\Act\mid D)\), \(P_\alpha\),
\(P_{\alpha\beta}\), and \(\ln B_\alpha\). Fifth, assess evidence precision, inactive-sector validation, calibration, posterior-predictive adequacy, prior-width sensitivity, sector-partition sensitivity, and catalog-refinement robustness.

\subsection{Minimal geometry-only DESI--CMB--SN analysis}
\label{subsec:geometry_application}

The most direct analysis is the geometry-only catalog, not the full five-sector catalog. It addresses the current DESI-era geometric degeneracy: whether posterior support currently attributed to late-time dark energy is instead absorbed by a sound-horizon shift or by supernova calibration structure.  The
three-sector pattern space is
\begin{equation}
\Act_g=(A_{E(z)},A_{\rd},A_{\rm SN})\in\{0,1\}^3,
\qquad |\ActSpace_g|=8.
\label{eq:geometry_pattern_space}
\end{equation}
For each supernova likelihood or compilation \(X\), analyze the common data vector
\begin{equation}
D_X=
\{D_{\rm BAO}^{\rm DESI\,DR2},
D_{\rm CMB}^{\rm Planck/ACT},
D_{\rm SN}^{X}\},
\qquad
X\in\{\mathrm{Pantheon+},\mathrm{Union3},\mathrm{DES\!-\!SN5YR},\mathrm{Dovekie}\}.
\label{eq:current_geometry_stack}
\end{equation}
The BAO block should include the DESI DR2 galaxy/quasar and Ly\(\alpha\) BAO measurements, the CMB block may be run with Planck 2018 as the reference and ACT DR6 as a high-precision cross-check, and the SN block should be swapped among Pantheon+, Union3, DES-SN5YR, and recalibrated DES-Dovekie likelihoods when available \cite{DESIDR2BAO2025,DESIDR2Lya2025,Planck2018Parameters,
ACTDR6Extended2025,BroutPantheonPlus2022,RubinUnion32023,DESY5SN2024,
DESDovekie2025}.  The same catalog, pattern prior, Fisher basis, covariance convention, marginal-likelihood estimation settings, and nuisance-prior convention must be used for all
\(X\).  Only the SN likelihood block is replaced.  This controlled comparison is the minimal observational test for distinguishing a stable late-time sector inference from one driven by the adopted SN calibration model.

The required output for each \(D_X\) is
\begin{equation}
\mathcal O_X=
\{p(\Act_g\mid D_X),P_{E(z)},P_{\rd},P_{\rm SN},
P_{E,\rd},P_{E,{\rm SN}},P_{\rd,{\rm SN}},
\ln B_{E(z)},\ln B_{\rd},\ln B_{\rm SN},
\Delta\vect y,a_{\rd},\Delta\vect v_\perp,
\widehat{\vect q},\vect r_\mu^\perp\}.
\label{eq:geometry_application_outputs}
\end{equation}
The diagnostic interpretation is as follows.  A late-time-background interpretation requires stable posterior mass on patterns with \(A_{E(z)}=1\) under \(X\to Y\) SN replacement and a residual structure not removed by
\(a_{\rd}\) or by \(M\widehat{\vect q}\).  A ruler interpretation requires localization in \(a_{\rd}\), stable AP residuals \(\Delta\vect y\), stable orthogonal scale residuals \(\Delta\vect v_\perp\), and stability under SN
substitution.  A calibration interpretation requires posterior mass to move primarily into patterns with \(A_{\rm SN}=1\), with the improvement carried by
\(M\widehat{\vect q}\) and without SN-block PPC failure.

For two SN choices \(X\) and \(Y\), define the difference between pattern posteriors and its total-variation norm by
\begin{equation}
\Delta_{X\to Y}(\Act_g)
\equiv
p(\Act_g\mid D_Y)-p(\Act_g\mid D_X),
\qquad
\TV_{X\to Y}
\equiv
\frac12\sum_{\Act_g\in\ActSpace_g}
|\Delta_{X\to Y}(\Act_g)|.
\label{eq:pattern_posterior_tv}
\end{equation}
For \(\TV_{X\to Y}\lesssim0.01\), the sector posterior is stable under that SN substitution.  For larger shifts, the table of \(\Delta_{X\to Y}(\Act_g)\) should be inspected directly.  A calibration-driven response requires the dominant positive changes to occur in patterns with \(A_{\rm SN}=1\), an increase in \(P_{\rm SN}\) and/or \(\ln B_{\rm SN}\), localization in \(M\widehat{\vect q}\), and acceptable SN PPC.  A late-time response instead requires the dominant positive changes to occur in patterns with \(A_{E(z)}=1\), with residual structure not explained by \(a_{\rd}\) or by the calibration modes.  A ruler response requires dominant positive changes in patterns with \(A_{\rd}=1\) and localization in \(a_{\rd}\).

\begin{table}[!tbp]
\caption{Minimal eight-pattern geometry analysis for the current DESI--CMB--SN comparison. The table defines the minimum reporting map for the geometry-only analysis; mixed pairs and the triple pattern are analyzed by the same diagnostic and failure-flag logic.}
\label{tab:geometry_application}
\tablefont
\tighttable
\begin{tabular}{@{}llll@{}}
\toprule
\tcell{0.16\textwidth}{Open sector(s)} &
\tcell{0.20\textwidth}{Representative catalog elements} &
\tcell{0.28\textwidth}{Required localization diagnostic} &
\tcell{0.24\textwidth}{Failure flag} \\
\midrule
\tcell{0.16\textwidth}{None} &
\tcell{0.20\textwidth}{Baseline \lcdm\ or \nulcdm} &
\tcell{0.28\textwidth}{All diagnostics consistent with null mocks} &
\tcell{0.24\textwidth}{False activation in \(>5\%\) of null mocks at \(P_\alpha>0.9\)} \\
\tcell{0.16\textwidth}{\(E(z)\)} &
\tcell{0.20\textwidth}{\wcdm, \wacdm, low-rank \(\delta\ln E\) modes} &
\tcell{0.28\textwidth}{Structured residual not removed by \(a_{\rd}\) or \(M\widehat{\vect q}\)} &
\tcell{0.24\textwidth}{Preference disappears under SN-likelihood substitution or PPC fails} \\
\tcell{0.16\textwidth}{\(\rd\)} &
\tcell{0.20\textwidth}{Sound-horizon shift, restricted early-time ruler families} &
\tcell{0.28\textwidth}{Gain localized in \(a_{\rd}\), with stable \(\Delta\vect y\) and \(\Delta\vect v_\perp\)} &
\tcell{0.24\textwidth}{AP or ruler-orthogonal scale residual must move} \\
\tcell{0.16\textwidth}{SN} &
\tcell{0.20\textwidth}{Low-rank calibration and selection modes} &
\tcell{0.28\textwidth}{Gain carried by \(M\widehat{\vect q}\), with stable \(\vect r_\mu^\perp\)} &
\tcell{0.24\textwidth}{Residual structure remains after calibration modes or SN PPC fails} \\
\tcell{0.16\textwidth}{Mixed pairs/triple} &
\tcell{0.20\textwidth}{Two- or three-sector combinations of the above} &
\tcell{0.28\textwidth}{Nonzero co-activations \(P_{\alpha\beta}\) and block-localized gains} &
\tcell{0.24\textwidth}{A single-sector label is assigned to mixed-sector injections} \\
\bottomrule
\end{tabular}
\end{table}

The quantities in Eqs.~(\ref{eq:patternposterior})--(\ref{eq:sectorodds}) are determined by the catalog evidences, the pattern prior, the within-pattern catalog priors, and the covariance of the log-evidence estimates entering
Eq.~(\ref{eq:lnB_error_propagation}). An analysis using this construction therefore specifies the activation map \(i\mapsto \Act(i)\), the catalog elements in each \(\mathcal M_{\Act}\), the priors \(\pi(\Act)\) and \(\pi(i\mid\Act)\), the likelihood blocks entering the data vector, and the marginal-likelihood estimates \(\{\ln Z_i\}\) with their numerical covariance. These inputs determine \(Z_{\Act}\), \(p(\Act\mid D)\), \(P_\alpha\),
\(P_{\alpha\beta}\), and \(\ln B_\alpha\), and they fix the statistical specification of the sector-level comparison.

\section{Validation criteria for sector-level claims}
\label{sec:validation}

A sector-level claim requires more than a favorable \(\ln B_\alpha\). Because the grouped estimator is intended to identify the physical sector responsible for an apparent anomaly, its outputs must be validated beyond family-by-family model preference. The required checks are calibration on
controlled injections, false-activation control under null mocks,
posterior-predictive adequacy in the data block that drives the preference, robustness to prior widths and catalog refinement, and explicit sensitivity to the declared sector partition.

To demonstrate that the grouped estimator does more than remove catalog multiplicity, a sector-level analysis should be validated on controlled injections that span the actual DESI degeneracy structure. At minimum one should include
null, \(E(z)\)-only, \(\rd\)-only, SN-only, and mixed \((E(z),\rd)\),
\((E(z),{\rm SN})\), and \((\rd,{\rm SN})\) benchmark classes. For each class one should report the recovered \(p(\Act\mid D)\), the dominant-pattern confusion matrix
\begin{equation}
\mathcal C_{\Act\Act'}
\equiv
\Pr\!\left(\widehat{\Act}_{\rm MAP}=\Act'\,\middle|\,\Act_{\rm true}=\Act\right),
\label{eq:pattern_confusion}
\end{equation}
the sector summaries \(\{P_\alpha,\ln B_\alpha\}\), and block-level posterior-predictive \(p\)-values. The key scientific requirement is localization: signals injected in distinct sectors should not systematically collapse onto the same preferred activation pattern. The corresponding validation classes are summarized in Table~\ref{tab:benchmark_suite}.

\begin{table}[!tbp]
\caption{Validation benchmark classes for a first sector-resolved DESI analysis. The entries specify mock-injection and robustness tests for calibrating sector probabilities, false activations, and pattern localization.}
\label{tab:benchmark_suite}
\tablefont
\tighttable
\begin{tabular}{@{}llll@{}}
\toprule
\tcell{0.12\textwidth}{Class} &
\tcell{0.24\textwidth}{Injection content} &
\tcell{0.25\textwidth}{Failure mode probed} &
\tcell{0.26\textwidth}{Required outputs} \\
\midrule
\tcell{0.12\textwidth}{Null} &
\tcell{0.24\textwidth}{Baseline-only mocks} &
\tcell{0.25\textwidth}{False activations and prior-driven sector odds} &
\tcell{0.26\textwidth}{\(\FAR_\alpha\), \(\ECE_\alpha\), \(p_{\rm PPC}^{(\alpha)}\), \(\Gamma_\alpha\)} \\
\tcell{0.12\textwidth}{Single-sector} &
\tcell{0.24\textwidth}{\(E(z)\)-only, \(\rd\)-only, SN-only, perturbation-only} &
\tcell{0.25\textwidth}{Sector mis-localization} &
\tcell{0.26\textwidth}{\(\mathcal C_{\Act,\Act'}\), \(P_\alpha\), \(\ln B_\alpha\), diagnostic residuals} \\
\tcell{0.12\textwidth}{Mixed-sector} &
\tcell{0.24\textwidth}{\((E(z),\rd)\), \((E(z),{\rm SN})\), \((\rd,{\rm SN})\), \((E(z),{\rm pert})\)} &
\tcell{0.25\textwidth}{Composite/single-sector confusion and geometric degeneracy} &
\tcell{0.26\textwidth}{\(p(\Act\mid D)\), \(P_{\alpha\beta}\), \(p_{\rm PPC}^{(\alpha)}\), tied-evidence flags} \\
\tcell{0.12\textwidth}{Robustness} &
\tcell{0.24\textwidth}{Prior-width, sparsity, basis, sector partition, and refinement scans on the above} &
\tcell{0.25\textwidth}{Specification dependence} &
\tcell{0.26\textwidth}{\(\mathcal R_\alpha\), \(D_{\rm JS}^{\rm part}\), \(\Delta_\alpha^\lambda\), \(\zeta_\alpha^{\rm ref}\)} \\
\bottomrule
\end{tabular}
\end{table}

\subsection{Minimum diagnostic suite}
\label{subsec:diagnostic_suite}

A sector-level analysis should report, at minimum, four classes of diagnostics. First, sector probabilities must be calibrated under null, single-sector, and mixed-sector benchmark classes. Second, false activations must be controlled at high posterior-probability thresholds. Third, a favorable grouped sector Bayes factor must not be accompanied by posterior-predictive failure in the same likelihood block. Fourth, the preferred sector must remain stable under moderate deformations of prior widths, admissible detection bases, sector-preserving catalog refinements, and declared sector partition.

A compact diagnostic core includes false-activation rate (FAR) and expected calibration error (ECE),
\begin{equation}
\FAR_\alpha(\tau)
\equiv
\Pr\!\left(P_\alpha>\tau\,\middle|\,\text{null benchmark}\right),
\label{eq:far_main}
\end{equation}
\begin{equation}
\ECE_\alpha
=
\sum_{k=1}^{K}
\frac{n_k}{N}
\left|
\frac{1}{n_k}\sum_{j\in B_k}P_{\alpha,j}
-
\frac{1}{n_k}\sum_{j\in B_k}y_{\alpha,j}
\right|,
\label{eq:ece_main}
\end{equation}
together with a sector-level posterior-predictive-check (PPC) tail probability \(p_{\rm PPC}^{(\alpha)}\) in the likelihood blocks that dominate the inferred sector preference \cite{GelmanMengStern1996PPC}. Here \(y_{\alpha,j}\in\{0,1\}\) denotes whether sector \(\alpha\) is truly active in benchmark realization \(j\). These quantities distinguish calibrated sector-level posterior probabilities from spurious high-posterior activations.

For finite validation suites the uncertainty on \(\FAR_\alpha\) should be reported with a binomial confidence interval, not as a point estimate alone.  A minimum numerical setup for a first observational analysis is
\begin{equation}
N_{\rm null}\ge 500,
\qquad
N_{\rm inj}(\Act)\ge100
\quad\text{for each non-null injected pattern used in the claim},
\label{eq:mock_count_recommendation}
\end{equation}
because \(N_{\rm null}=500\) gives a binomial standard error
\(\sqrt{0.05\times0.95/500}=0.0097\) at the target false-activation rate \(0.05\).  If no false activations are observed, the corresponding one-sided 95\% upper limit should be quoted rather than reported as zero.

Calibration should also be summarized by proper scoring rules \cite{GneitingRaftery2007},
\begin{equation}
\BS_\alpha
=
\frac1N\sum_{j=1}^{N}(P_{\alpha,j}-y_{\alpha,j})^2,
\label{eq:brier_score}
\end{equation}
\begin{equation}
\NLS_\alpha
=-\frac1N\sum_{j=1}^{N}
\left[
y_{\alpha,j}\ln(P_{\alpha,j}+\epsilon_{\rm mach})+
(1-y_{\alpha,j})\ln(1-P_{\alpha,j}+\epsilon_{\rm mach})
\right],
\label{eq:negative_log_score}
\end{equation}
where \(\BS\) penalizes overconfident mis-localization and \(\NLS\) is sensitive to high-confidence false sector claims.

For a block \(b\) driving a sector preference, the posterior-predictive tail probability should be computed from an explicit discrepancy statistic, for example
\begin{equation}
T_b(d_b,\theta)
=
[d_b-\mu_b(\theta)]^T C_b^{-1}[d_b-\mu_b(\theta)],
\label{eq:block_ppc_statistic}
\end{equation}
\begin{equation}
p_{{\rm PPC},b}
=
\Pr\left[T_b(d_b^{\rm rep},\theta)\ge T_b(d_b^{\rm obs},\theta)
\middle|D\right].
\label{eq:block_ppc_tail}
\end{equation}
The sector PPC \(p_{\rm PPC}^{(\alpha)}\) used below is the PPC of the block, or minimum over blocks, that carries the largest posterior-mean contribution to the sector Bayes-factor gain.

The diagnostic suite is part of the statistical specification of the sector claim. Results that fail null control, probability calibration, posterior-predictive adequacy, leakage control, or robustness checks should be reported as unlocalized model-preference signals rather than as sector identifications.

The validation budget should be specified quantitatively. If zero false activations are observed in \(N_{\rm null}\) null realizations, the one-sided 95\% Clopper--Pearson upper bound is \cite{ClopperPearson1934}
\begin{equation}
p_{95}^{(0)}=1-0.05^{1/N_{\rm null}}.
\label{eq:zero_false_upper_bound}
\end{equation}
Thus \(N_{\rm null}\ge59\) is the absolute minimum needed to demonstrate \(\FAR<0.05\) when no false activations occur; in practice \(N_{\rm null}\ge200\) is a useful target, giving binomial standard error \(\sqrt{p(1-p)/N}\simeq0.015\) at \(p=0.05\). For each single-sector and mixed-sector benchmark class, \(N\ge100\) realizations give \(\lesssim0.05\) binomial uncertainty for confusion-matrix entries away from the boundaries.

\subsection{Data-block removal and replacement diagnostics}
\label{subsec:block_diagnostics}

Sector activation should be localized in both parameter space and the data blocks that generate the evidence. Let \(D_{\setminus b}\) denote the data vector with block \(b\) removed, and let \(D[b\to b']\) denote a controlled replacement of one block by an alternative reduction or calibration analysis. We define the leave-one-block-out response
\begin{equation}
\Delta_{\alpha}^{(-b)}\equiv\ln B_\alpha(D)-\ln B_\alpha(D_{\setminus b})
\label{eq:block_removal}
\end{equation}
and the replacement response
\begin{equation}
R_{\alpha}^{(b\to b')}\equiv\ln B_\alpha(D[b\to b'])-\ln B_\alpha(D).
\label{eq:block_replacement}
\end{equation}
The propagated uncertainty is obtained from the same evidence covariance used in Eq.~(\ref{eq:lnB_error_propagation}). For example,
\begin{equation}
\sigma^2(\Delta_{\alpha}^{(-b)})
\simeq
\sigma^2[\ln B_\alpha(D)]+\sigma^2[\ln B_\alpha(D_{\setminus b})]
-2\,\mathrm{Cov}[\ln B_\alpha(D),\ln B_\alpha(D_{\setminus b})],
\label{eq:block_response_variance}
\end{equation}
with an analogous expression for \(R_{\alpha}^{(b\to b')}\). The sector Bayes factor is block-sensitive if
\begin{equation}
|\Delta_{\alpha}^{(-b)}|>\max\{1,2\sigma(\Delta_{\alpha}^{(-b)})\}
\quad\text{or}\quad
{\rm sign}\,\ln B_\alpha(D)\ne {\rm sign}\,\ln B_\alpha(D_{\setminus b}),
\label{eq:block_sensitivity_condition}
\end{equation}
and sensitive to the alternative reduction if the same criterion is triggered by \(R_{\alpha}^{(b\to b')}\). For the DESI geometric problem, the decisive replacement tests are Pantheon+\(\leftrightarrow\)Union3,
Pantheon+\(\leftrightarrow\)DES-SN5YR, and DES-SN5YR\(\leftrightarrow\) Dovekie-like recalibrations at fixed DESI BAO and CMB inputs. A late-time sector claim requires stable \(A_{E(z)}=1\) support under these replacements. A
calibration interpretation requires the dominant response to occur in \(P_{\rm SN}\), \(\ln B_{\rm SN}\), and the calibration diagnostic \(M\widehat{\vect q}\), not in \(P_{E(z)}\) alone.

\subsection{Robustness summary}
\label{subsec:robustness_summary}

A convenient summary of the robustness requirement is
\begin{equation}
\mathcal R_\alpha
\equiv
\left\{
\eta_\alpha(2),\eta_\alpha(4),
\delta_\alpha^{\rm ref},\zeta_\alpha^{\rm ref},
D_{\rm JS}^{\rm basis},D_{\rm JS}^{\rm part},
\Delta_\alpha^\lambda,\Gamma_\alpha,
\max_b |\Delta_\alpha^{(-b)}|,\max_{b\to b'}|R_\alpha^{(b\to b')}|
\right\},
\label{eq:robustness_vector}
\end{equation}
whose explicit definitions are collected in Appendix~\ref{app:technical} and in Sec.~\ref{subsec:sector_partition_prior}. The role of \(\mathcal R_\alpha\) is to distinguish a genuinely supported sector from one that is activated only because of a particular prior width, basis choice, sector-partition split, or catalog discretization.

For an analysis using this construction we recommend the provisional acceptance targets
\begin{equation}
\FAR_\alpha(0.9)<0.05,
\qquad
\ECE_\alpha<0.03,
\qquad
p_{\rm PPC}^{(\alpha)}\in[0.05,0.95],
\qquad
\zeta_\alpha^{\rm ref}<1,
\label{eq:acceptance_targets}
\end{equation}
with additional sector-partition targets given in Eq.~(\ref{eq:sector_partition_targets}). These values are not universal constants, but they define explicit reporting thresholds and place sector-level claims on a quantitative footing.

The numerical thresholds in Eq.~(\ref{eq:acceptance_targets}) are reporting criteria rather than universal decision thresholds. A sector interpretation should quote \(P_\alpha\), \(\ln B_\alpha\), \(\sigma(\ln B_\alpha)\),
\(\FAR_\alpha\), \(\ECE_\alpha\), PPC values, \(\mathcal R_\alpha\), and the prior/sector-partition diagnostics of Sec.~\ref{subsec:sector_partition_prior}. Large
\(P_\alpha\) or favorable \(\ln B_\alpha\) should be interpreted as sector localized only when these calibration, predictive, and robustness checks are satisfied. 

\subsection{Pattern-prior and sector-partition sensitivity}
\label{subsec:sector_partition_prior}

The grouped construction removes arbitrary multiplicity inside a fixed activation pattern while leaving the analyst's declared sector partition explicit. This dependence is statistical: when two or more activation patterns give nearly equal likelihood improvement, the pattern prior is the tie-breaker. Therefore every sector-level analysis must report both a prior scan and a sector-partition coarsening test.

For the sparsity family of Eq.~(\ref{eq:patternprior_main}), the minimal prior scan is
\begin{equation}
\lambda\in\{0,\ln2,2\ln2\},
\label{eq:lambda_scan}
\end{equation}
corresponding to a pattern-neutral prior, a factor-of-two penalty per activated sector, and a stronger factor-of-four penalty per activated sector. We define
\begin{equation}
\Delta_\alpha^\lambda
\equiv
\max_{\lambda,\lambda'\in\{0,\ln2,2\ln2\}}
\left|\ln B_\alpha(\lambda)-\ln B_\alpha(\lambda')\right|,
\qquad
\Delta P_\alpha^\lambda
\equiv
\max_{\lambda,\lambda'}|P_\alpha(\lambda)-P_\alpha(\lambda')|.
\label{eq:lambda_sensitivity}
\end{equation}
Sector-level conclusions should not be regarded as stable if the dominant pattern or leading sector probabilities change qualitatively across this scan.

The dependence on sector boundaries is handled by coarsening-compatible priors. Let \(\widetilde{\ActSpace}\) be a finer sector partition and let
\begin{equation}
C:\widetilde{\ActSpace}\rightarrow \ActSpace
\label{eq:coarsening_map}
\end{equation}
map fine activation patterns to coarse patterns. For example, a split \(\widetilde\Theta_{\rm pert}=\Theta_{\rm growth}\cup\Theta_{\rm lensing}\) is coarsened by
\begin{equation}
A_{\rm pert}=A_{\rm growth}\vee A_{\rm lensing}.
\label{eq:growth_lensing_coarsening}
\end{equation}
The fine prior \(\widetilde\pi\) is coarsening-compatible with the coarse prior \(\pi\) if
\begin{equation}
\sum_{\widetilde\Act:\,C(\widetilde\Act)=\Act}
\widetilde\pi(\widetilde\Act)=\pi(\Act).
\label{eq:coarsening_compatible_prior}
\end{equation}
Equivalently,
\begin{equation}
\widetilde\pi(\widetilde\Act)
=
\pi\big(C(\widetilde\Act)\big)
\rho\big(\widetilde\Act\mid C(\widetilde\Act)\big),
\qquad
\sum_{\widetilde\Act:\,C(\widetilde\Act)=\Act}
\rho(\widetilde\Act\mid\Act)=1.
\label{eq:fine_prior_factorization}
\end{equation}
With this rule, splitting perturbations into growth and lensing does not automatically impose a double high-level penalty on a theory that modifies both. The total prior mass assigned to ``some perturbation activation'' is preserved; only the conditional allocation within that coarse event is changed.

The posterior-level sector-partition diagnostic is the Jensen--Shannon distance between the coarse posterior and the push-forward of the fine posterior,
\begin{equation}
(C_\#\widetilde p)(\Act\mid D)
\equiv
\sum_{\widetilde\Act:\,C(\widetilde\Act)=\Act}
\widetilde p(\widetilde\Act\mid D),
\qquad
D_{\rm JS}^{\rm part}
\equiv
\JS\!\left[p(\Act\mid D),C_\#\widetilde p(\Act\mid D)\right].
\label{eq:sector_partition_js}
\end{equation}
An analysis should report
\begin{equation}
D_{\rm JS}^{\rm part}<0.02\ {\rm nats},
\qquad
\max_\lambda |\Delta P_\alpha^{\rm part}|<0.1,
\qquad
\max_\lambda |\Delta\ln B_\alpha^{\rm part}|<0.5,
\label{eq:sector_partition_targets}
\end{equation}
as provisional targets for sector-partition stability. Here \(\Delta P_\alpha^{\rm part}\) and \(\Delta\ln B_\alpha^{\rm part}\) denote the changes induced by replacing the coarse sector partition with a coarsening-compatible fine sector partition and then pushing the result back to the coarse space.

The prior tie-breaker can also be measured analytically. For \(\pi_\lambda(\Act)\propto \pi_0(\Act)e^{-\lambda K(\Act)}\), define sector odds \(O_\alpha(\lambda)=P_\alpha(\lambda)/[1-P_\alpha(\lambda)]\). Then
\begin{equation}
\frac{\partial}{\partial\lambda}\ln O_\alpha(\lambda)
=
-\mathbb E_{D,\lambda}\!\left[K\mid A_\alpha=1\right]
+\mathbb E_{D,\lambda}\!\left[K\mid A_\alpha=0\right],
\label{eq:prior_tiebreaker_derivative}
\end{equation}
and we define
\begin{equation}
\Gamma_\alpha(\lambda)
\equiv
\left|\frac{\partial}{\partial\lambda}\ln O_\alpha(\lambda)\right|.
\label{eq:Gamma_alpha}
\end{equation}
If \(\Gamma_\alpha\sim 1\), changing \(\lambda\) by one natural unit changes the sector odds by order \(e\). For the geometrically degenerate \((E(z),\rd,{\rm SN})\) subproblem, \(\Gamma_\alpha\) should be reported for all three sectors.

Finally, if two leading activation patterns satisfy
\begin{equation}
\left|\ln Z_{\Act_1}-\ln Z_{\Act_2}\right|
<
\max\left(1,2\sigma_{\Delta\ln Z}\right),
\label{eq:evidence_neartie_condition}
\end{equation}
the leading patterns are not separated at evidence level. In that case a posterior ranking under a particular \(\pi(\Act)\) should not be interpreted as a physical sector detection unless the ranking is stable under Eqs.~(\ref{eq:lambda_sensitivity})--(\ref{eq:sector_partition_targets}) and is localized by the likelihood diagnostics of Sec.~\ref{subsec:reporting_standard}.

\subsection{Reporting standard}
\label{subsec:reporting_standard}

A DESI-era sector-level claim should be reported through the tuple of Eq.~(\ref{eq:reporttuple}). Without that tuple, the result remains a model-preference statement rather than a sector-resolved inference claim. For perturbations, the motivation is aligned with phenomenological large-scale structure descriptions of modified gravity that compress the observable freedom into a small number of functions or amplitudes \cite{BelliniSawicki2014,Ishak2025MGDESI}. For tensor propagation, the \((\Xi_0,n)\) language follows the modified-propagation formalism for standard sirens and its recent observational implementations \cite{Belgacem2018,LVK2026GWTC5}.

Accordingly, a DESI-era analysis should report the global pattern posterior
\(p(\Act\mid D)\) together with, for each relevant sector \(\alpha\),
\begin{equation}
\begin{aligned}
\Big\{&
\text{posterior constraints},\; \ln B_\alpha,\; P_\alpha,\;
\{P_{\alpha\beta}\}_{\beta\neq\alpha},\; \ln S,\; d,\\
&\text{shift diagnostics},\; \text{PPC},\; \text{bias budget},\;
\mathcal R_\alpha,\; \{\Delta_{X\to Y},\TV_{X\to Y}\}
\Big\}.
\end{aligned}
\label{eq:reporttuple}
\end{equation}
The point of this reporting standard is that a DESI-era anomaly should be localized at sector level before it is interpreted in terms of any specific microphysical model class.

For the geometrically dominant three-sector subproblem \((\Theta_{E(z)},\Theta_{\rd},\Theta_{\rm SN})\), the generic tuple above should be supplemented by the internal diagnostic variables
\begin{equation}
\left\{
\Delta\vect y,\;
 a_{\rd},\;
\Delta\vect v_\perp,\;
\widehat{\vect q},\;
\vect r_\mu^{\perp}
\right\}.
\label{eq:diagnostic_variables}
\end{equation}
Here \(\Delta\vect y\), \(a_{\rd}\), and \(\Delta\vect v_\perp\) are the ruler-free, ruler-aligned, and ruler-orthogonal BAO quantities of Eqs.~(\ref{eq:bao_residuals})--(\ref{eq:rd_projection}), while \(\widehat{\vect q}\) and \(\vect r_\mu^{\perp}\) are the posterior calibration mode and orthogonal SN residual of Eqs.~(\ref{eq:sn_posterior_mean})--(\ref{eq:sn_orthogonal_residual}). Ruler support requires localization in \(a_{\rd}\) with stable \(\Delta\vect y\) and \(\Delta\vect v_\perp\). Calibration support requires localization in \(M\widehat{\vect q}\) with stable \(\vect r_\mu^\perp\). Late-time background support requires residual structure not exhausted by either the ruler projection or the calibration modes. These diagnostics determine whether the support is diagnostically localized rather than only preferred in the global evidence.

Result tables should retain the numerical diagnostics rather than replacing them by categorical classifications. In particular, nearly tied pattern evidences, strong \(\lambda\)-dependence, sector-partition dependence, or a large leave-one-data-block-out response should be visible in the table entries themselves.
This convention keeps the statistical assumptions explicit and avoids interpreting a prior tie-breaker as a physical sector detection.

\section{Exact analytic demonstrations}
\label{sec:toychecks}

This section gives two exact demonstrations of the grouped construction in regimes where every relevant quantity is available in closed form. The first isolates the catalog-multiplicity pathology and shows its exact removal by quotient-space inference. The second provides a solvable Gaussian toy catalog with exact pattern posteriors, sector probabilities, grouped Bayes factors, and maximum-a-posteriori (MAP) phase boundaries.

\subsection{Exact duplication and multiplicity bias}
\label{subsec:duplication_bias}

Consider two activation patterns \(\Act_A\) and \(\Act_B\) with equal grouped evidences \(Z_{\Act_A}=Z_{\Act_B}=Z\) and equal pattern priors \(\pi(\Act_A)=\pi(\Act_B)=1/2\). If \(\Act_B\) is represented by \(J\) exact duplicates and one adopts a naive global-uniform prior over raw model labels, then
\begin{equation}
p_{\rm naive}(\Act_B\mid D)=\frac{J}{1+J},
\qquad
p_{\rm naive}(\Act_A\mid D)=\frac{1}{1+J}.
\label{eq:duplication_naive_section}
\end{equation}
By contrast, the grouped construction leaves
\begin{equation}
p(\Act_B\mid D)=p(\Act_A\mid D)=\frac12
\qquad
\text{for all } J\ge 1.
\label{eq:duplication_grouped_section}
\end{equation}
Figure~\ref{fig:duplication_pathology} visualizes this exact pathology and its removal by grouped inference. The example isolates the sector-level aggregation problem: raw model labels bias a later sector statement when they are counted as physical alternatives.

\begin{figure*}[tbp]
\centering
\begin{tikzpicture}
\begin{groupplot}[
  group style={group size=2 by 1, horizontal sep=1.6cm},
  desiAxis,
  height=0.34\textwidth,
  width=0.46\textwidth
]

\nextgroupplot[
  xmin=1, xmax=10,
  ymin=0.48, ymax=0.93,
  xtick={1,2,...,10},
  ytick={0.5,0.6,0.7,0.8,0.9},
  xlabel={number of exact duplicates \(J\)},
  ylabel={posterior mass \(p(\Act_B\mid D)\)},
  legend style={
    at={(0.97,0.05)},
    anchor=south east,
    font=\scriptsize,
    draw=black!25,
    fill=white,
    fill opacity=0.94,
    text opacity=1,
    row sep=1pt,
    inner xsep=4pt,
    inner ysep=3pt
  }
]
\addplot[theorycurve, domain=1:10] {x/(1+x)};
\addlegendentry{naive model-index posterior}
\addplot[datapts, forget plot] coordinates {
  (1,0.500) (2,0.667) (3,0.750) (4,0.800) (5,0.833)
  (6,0.857) (7,0.875) (8,0.889) (9,0.900) (10,0.909)
};
\addplot[refcurve, domain=1:10] {0.5};
\addlegendentry{grouped posterior}

\nextgroupplot[
  xmin=1, xmax=10,
  ymin=-0.02, ymax=2.36,
  xtick={1,2,...,10},
  ytick={0,0.5,1.0,1.5,2.0},
  xlabel={number of exact duplicates \(J\)},
  ylabel={spurious log-odds shift \(\Delta\ln O\)},
  legend style={
    at={(0.98,0.05)},
    anchor=south east,
    font=\scriptsize,
    draw=black!25,
    fill=white,
    fill opacity=0.94,
    text opacity=1,
    row sep=1pt,
    inner xsep=4pt,
    inner ysep=3pt
  }
]
\addplot[theorycurve, domain=1:10] {ln(x)};
\addplot[refcurve, domain=1:10] {0};

\addlegendimage{black, line width=0.55pt}
\addlegendentry{naive model-index result}
\addlegendimage{black, dashed, line width=0.50pt}
\addlegendentry{grouped result}

\node[anchor=south east,font=\footnotesize] at (axis cs:6,2.053)
  {\(J=10 \Rightarrow \Delta\ln O = 2.30\)};

\end{groupplot}
\end{tikzpicture}
\caption{Exact duplication pathology. {Left:} naive model-index inference assigns \(p_{\rm naive}(\Act_B\mid D)=J/(1+J)\) to the activation pattern represented by \(J\) duplicates, whereas the grouped posterior remains exactly \(1/2\). {Right:} the same duplication induces a purely combinatorial log-odds bias \(\Delta\ln O(J)=\logit p_{\rm naive}(\Act_B\mid D)-\logit(1/2)=\ln J\), despite no change in information content.}
\label{fig:duplication_pathology}
\end{figure*}

The induced combinatorial bias is especially transparent in log-odds form:
\begin{equation}
\Delta \ln O(J)
\equiv
\logit p_{\rm naive}(\Act_B\mid D)-\logit p(\Act_B\mid D)
=
\ln J.
\label{eq:duplication_logodds}
\end{equation}
Thus exact duplication injects an additive spurious support term of size \(\ln J\) in favor of \(\Act_B\) without changing the underlying information in the data.

\subsection{Closed-form Gaussian toy catalog}
\label{subsec:gaussian_toy}

A minimal analytic example can be constructed with a whitened compressed observable \(\vect x\in\mathbb R^2\),
\begin{equation}
\vect x = T_{\Act}\vect s_{\Act}+\vect\epsilon,
\qquad
\vect\epsilon\sim\mathcal N(\vect 0,I_2),
\qquad
\vect s_{\Act}\sim \mathcal N(\vect 0,\sigma^2 I_{K_{\Act}}),
\label{eq:toy_model}
\end{equation}
where \(\Act=(A_E,A_{\rd})\in\{00,10,01,11\}\), \(K_{\Act}=A_E+A_{\rd}\), and
\begin{equation}
T_{00}=\emptyset,
\qquad
T_{10}=\vect e_1,
\qquad
T_{01}=\vect e_2,
\qquad
T_{11}=[\vect e_1,\vect e_2].
\label{eq:toy_design}
\end{equation}
The pattern evidence is available in closed form:
\begin{equation}
Z_{\Act}(\vect x)=
\mathcal N\!\left(\vect x;\vect 0,I_2+\sigma^2T_{\Act}T_{\Act}^{T}\right).
\label{eq:toy_evidence}
\end{equation}
We adopt the sparsity prior
\begin{equation}
\pi(\Act)\propto 2^{-K_{\Act}},
\label{eq:toy_pattern_prior}
\end{equation}
and set \(\sigma=2\). Table~\ref{tab:toy_catalog} reports the exact pattern posterior, sector inclusion probabilities, and grouped Bayes factors for representative synthetic observations. The value of this example is exact control: all evidences, pattern posteriors, sector probabilities, grouped Bayes factors, and phase boundaries are analytic. It is therefore the smallest nontrivial catalog in which null, single-sector, and mixed-sector activation can all be displayed exactly.

\begin{table}[!tbp]
\caption{Closed-form Gaussian toy catalog defined by Eqs.~(\ref{eq:toy_model})--(\ref{eq:toy_pattern_prior}). Here \(\Act=(A_E,A_{\rd})\), \(\sigma=2\), and the entries are exact grouped posteriors for representative synthetic observations \(\vect x\).}
\label{tab:toy_catalog}
\tablefont
\tighttable
\begin{tabular}{l c cccc cc cc}
\toprule
Case & \(\vect x\) &
\(p(00\mid \vect x)\) &
\(p(10\mid \vect x)\) &
\(p(01\mid \vect x)\) &
\(p(11\mid \vect x)\) &
\(P_E\) & \(P_{\rd}\) &
\(\ln B_E\) & \(\ln B_{\rd}\) \\
\midrule
Null & \((0,0)\) &
0.668 & 0.149 & 0.149 & 0.033 &
0.183 & 0.183 &
-0.805 & -0.805 \\
\(E\)-only, moderate & \((2,0)\) &
0.388 & 0.429 & 0.087 & 0.096 &
0.526 & 0.183 &
0.795 & -0.805 \\
\(E\)-only, strong & \((3,0)\) &
0.089 & 0.728 & 0.020 & 0.163 &
0.891 & 0.183 &
2.795 & -0.805 \\
\(\rd\)-only, moderate & \((0,2)\) &
0.388 & 0.087 & 0.429 & 0.096 &
0.183 & 0.526 &
-0.805 & 0.795 \\
Mixed & \((2,2)\) &
0.225 & 0.249 & 0.249 & 0.276 &
0.526 & 0.526 &
0.795 & 0.795 \\
\bottomrule
\end{tabular}
\end{table}

\begin{figure}[tbp]
\centering
\begin{tikzpicture}
\def\xstar{1.93511}
\begin{axis}[
  desiAxis,
  grid=none,
  width=0.42\columnwidth,
  height=0.42\columnwidth,
  xmin=-3.4, xmax=3.4,
  ymin=-3.4, ymax=3.4,
  axis equal image,
  xtick={-3,-2,-1,0,1,2,3},
  ytick={-3,-2,-1,0,1,2,3},
  xlabel={$x_E$},
  ylabel={$x_{\rd}$}
]
  \path[fill=black!4]  (axis cs:-\xstar,-\xstar) rectangle (axis cs:\xstar,\xstar);
  \path[fill=black!10] (axis cs:\xstar,-\xstar)  rectangle (axis cs:3.4,\xstar);
  \path[fill=black!10] (axis cs:-3.4,-\xstar)    rectangle (axis cs:-\xstar,\xstar);
  \path[fill=black!10] (axis cs:-\xstar,\xstar)  rectangle (axis cs:\xstar,3.4);
  \path[fill=black!10] (axis cs:-\xstar,-3.4)    rectangle (axis cs:\xstar,-\xstar);

  \path[fill=black!18] (axis cs:\xstar,\xstar)   rectangle (axis cs:3.4,3.4);
  \path[fill=black!18] (axis cs:-3.4,\xstar)     rectangle (axis cs:-\xstar,3.4);
  \path[fill=black!18] (axis cs:-3.4,-3.4)       rectangle (axis cs:-\xstar,-\xstar);
  \path[fill=black!18] (axis cs:\xstar,-3.4)     rectangle (axis cs:3.4,-\xstar);

  \addplot[black, line width=0.45pt] coordinates {(-\xstar,-3.4) (-\xstar,3.4)};
  \addplot[black, line width=0.45pt] coordinates {(\xstar,-3.4) (\xstar,3.4)};
  \addplot[black, line width=0.45pt] coordinates {(-3.4,-\xstar) (3.4,-\xstar)};
  \addplot[black, line width=0.45pt] coordinates {(-3.4,\xstar) (3.4,\xstar)};

  \node[font=\footnotesize] at (axis cs:0,0) {$00$};
  \node[font=\footnotesize] at (axis cs:2.65,0) {$10$};
  \node[font=\footnotesize] at (axis cs:-2.65,0) {$10$};
  \node[font=\footnotesize] at (axis cs:0,2.65) {$01$};
  \node[font=\footnotesize] at (axis cs:0,-2.65) {$01$};
  \node[font=\footnotesize] at (axis cs:2.65,2.65) {$11$};
  \node[font=\footnotesize] at (axis cs:-2.65,2.65) {$11$};
  \node[font=\footnotesize] at (axis cs:-2.65,-2.65) {$11$};
  \node[font=\footnotesize] at (axis cs:2.65,-2.65) {$11$};
\end{axis}
\end{tikzpicture}
\caption{Exact MAP phase diagram of the closed-form Gaussian toy catalog for \(\sigma=2\) and \(\pi(\Act)\propto 2^{-K_{\Act}}\). The decision boundary is \(|x_\alpha|=x_\star\) with \(x_\star^2=\frac{5}{2}\ln(2\sqrt5)=3.7447\), hence \(x_\star=1.935\). The central square is the null pattern \(00\), the side bands are the single-sector patterns \(10\) and \(01\), and the four corners correspond to mixed activation \(11\).}
\label{fig:toy_phase_diagram}
\end{figure}

For the adopted prior \(\pi(\Act)\propto 2^{-K_{\Act}}\) and \(\sigma=2\), the MAP phase boundaries are analytic. Comparing the \(10\) and \(00\) patterns gives
\begin{equation}
|x_E|=x_\star,
\qquad
x_\star^2=\frac{5}{2}\ln(2\sqrt5)=3.7447,
\qquad
x_\star=1.935,
\label{eq:toy_threshold}
\end{equation}
and, by symmetry, the same threshold applies to \(|x_{\rd}|\). Hence the null pattern dominates inside the central square \(|x_E|<x_\star\), \(|x_{\rd}|<x_\star\), the single-sector patterns dominate in the side bands, and the mixed pattern dominates in the four corners. This global structure is shown in Fig.~\ref{fig:toy_phase_diagram}.

The two-sector toy is deliberately minimal. It isolates the algebra of grouped activation inference in the smallest catalog that already exhibits null, single-sector, and mixed-sector structure. Extending the same construction to \(\Theta_{\rm SN}\), \(\Theta_{\rm pert}\), and \(\Theta_{\rm GW}\) is straightforward but would add notation rather than new inference logic.

\section{Discussion and conclusions}
\label{sec:concl}

The central statistical object in the construction is the activation pattern \(\Act\), not the raw catalog label. Ordinary pairwise comparisons between predeclared families remain summarized by the corresponding evidence ratio. The quotient construction addresses the sector-level question: how posterior support should be assigned when the scientific statement is a coarse physical event, such as activation of late-time background physics or SN calibration structure, and the catalog contains unequal numbers of parameterizations for different events.

The primary Bayesian object is the activation pattern \(\Act\), and the physically interpretable outputs are derived from its posterior: the sector marginals \(P_\alpha\), the co-activation probabilities \(P_{\alpha\beta}\), and the grouped support quantities \(\ln B_\alpha(D)\). Once sector-preserving catalog refinement leaves the grouped evidences \(Z_{\Act}\) and pattern prior \(\pi(\Act)\) fixed, the sector-level inference is provably unchanged. By contrast, the exact duplication example of Sec.~\ref{sec:toychecks} shows that naive model-index inference suffers an additive combinatorial log-odds bias \(\Delta\ln O(J)=\ln J\) under catalog multiplicity.

Physical interpretation requires stable posterior mass on activation events, likelihood-level diagnostics that localize the relevant residuals, and quantitative checks of calibration and robustness. A claim of late-time background, ruler, or calibration support is physically interpretable when the grouped outputs \(\{p(\Act\mid D),P_\alpha,P_{\alpha\beta},\ln B_\alpha\}\) are consistent with the internal diagnostics \(\{\Delta\vect y,a_{\rd},\Delta\vect v_\perp,\widehat{\vect q},\vect r_\mu^\perp\}\), satisfy the predictive and robustness criteria, and survive controlled data-block replacement tests. Under those conditions the construction maps model-comparison support to a physically interpretable sector.

The construction also quantifies its principal specification dependence. Grouping removes within-pattern model multiplicity, while the declared sector partition remains part of the statistical model. If a broad perturbation sector is split into growth and lensing, or if an SN systematic requires two fine switches while a dark-energy parameterization requires one, a naive sparsity prior can become a physical tie-breaker. Section~\ref{subsec:sector_partition_prior} therefore requires pattern-neutral and sparsity-prior scans, coarsening-compatible priors, sector-partition push-forward tests, and the prior-sensitivity statistic \(\Gamma_\alpha\). When competing pattern evidences are nearly tied, posterior ranking is interpreted together with these diagnostics.

The smallest observational test of the construction is a common-catalog three-sector analysis of \(\Theta_{E(z)}\), \(\Theta_{\rd}\), and \(\Theta_{\rm SN}\) on
\begin{equation}
\mathcal D_1=\{\mathrm{DESI\ BAO}+\mathrm{CMB}+\mathrm{SN}_{A}\},
\qquad
\mathcal D_2=\{\mathrm{DESI\ BAO}+\mathrm{CMB}+\mathrm{SN}_{B}\},
\label{eq:minimal_real_data_runs}
\end{equation}
where \(\mathrm{SN}_{A}\) and \(\mathrm{SN}_{B}\) denote alternative supernova-calibration likelihoods analyzed with the same activation-pattern definition, pattern prior, detection basis, likelihood specification, and marginal-likelihood calculation. The primary output is the change in posterior mass among activation patterns, reported together with posterior-contour shifts inside fixed late-time model families.

For the comparison of \(\mathcal D_1\) and \(\mathcal D_2\), a late-time interpretation requires stable preference for patterns with \(A_{E(z)}=1\) across the alternative SN likelihoods, together with acceptable BAO- and SN-block PPCs. A ruler interpretation requires corresponding stability in patterns with \(A_{\rd}=1\), localization in \(a_{\rd}\), and stable
\(\Delta\vect y\) and \(\Delta\vect v_\perp\). A calibration-driven
interpretation requires the dominant posterior movement under
\(\mathrm{SN}_{A}\rightarrow\mathrm{SN}_{B}\) to occur in \(P_{\rm SN}\) and/or \(\ln B_{\rm SN}\), together with localization in \(M\widehat{\vect q}\), rather than in \(P_{E(z)}\) or \(P_{\rd}\). In all cases, posterior-predictive failure in the affected block precludes a localized sector interpretation.

The immediate analysis is finite: apply the pattern-labeled catalog and the sector-resolved likelihood specification to the data combination of Eq.~(\ref{eq:current_geometry_stack}), starting with the eight-pattern
\((E(z),\rd,{\rm SN})\) analysis of Sec.~\ref{subsec:geometry_application}
and then adding perturbations and GW propagation. The required reporting set is
\(\{p(\Act\mid D),P_\alpha,P_{\alpha\beta},\ln B_\alpha,{\rm PPC},\mathcal R_\alpha\}\), supplemented by the BAO and SN diagnostics for the geometric subproblem. Without these sector-resolved outputs and the corresponding
calibration, leakage, prior, sector-partition, and robustness diagnostics, claims of dynamical dark energy remain model-comparison statements rather than identified physical-sector interpretations.

\section*{Acknowledgments}
The work described here was carried out at the Jet Propulsion Laboratory, California Institute of Technology, Pasadena, California, under a contract with the National Aeronautics and Space Administration.  \textcopyright\ 2026. California Institute of Technology. Government sponsorship acknowledged.

\appendix

\section{Technical details}
\label{app:technical}

\subsection{Proof of Proposition~\ref{prop:grouped_invariance}}
\label{app:proofs}

\begin{proof}
For every pattern \(\Act\),
\begin{align}
\widetilde Z_{\Act}
&=
\sum_{\widetilde i\in\widetilde{\Cat}_{\Act}}
\widetilde\pi(\widetilde i\mid\Act)Z_{\widetilde i}
=
\sum_{i\in\Cat_{\Act}}
\sum_{\widetilde i\in r^{-1}(i)}
\widetilde\pi(\widetilde i\mid\Act)Z_{\widetilde i}
=
\sum_{i\in\Cat_{\Act}}
\pi(i\mid\Act)Z_i
=
Z_{\Act}.
\end{align}
Eqs.~(\ref{eq:patternposterior}), (\ref{eq:sectorodds}), and (\ref{eq:pairwise_activation}) depend on the catalog only through \(\{Z_{\Act},\pi(\Act)\}_{\Act\in\ActSpace}\). Therefore \(\widetilde p(\Act\mid D)=p(\Act\mid D)\), \(\widetilde P_\alpha=P_\alpha\), \(\widetilde P_{\alpha\beta}=P_{\alpha\beta}\), and \(\widetilde B_\alpha(D)=B_\alpha(D)\).
\end{proof}

\subsection{Benchmark classes and calibration diagnostics}
\label{app:benchmarks}

A minimal validation family should contain three classes of benchmark scenarios:
(i) {\it Null class.} Null realizations test whether the grouped estimator spuriously activates a sector when no anomaly is present.
(ii) {\it Single-sector class.}
Single-sector realizations test whether the estimator correctly localizes a background, ruler, supernova, perturbation, or GW-sector activation when only one sector is active.
(iii) {\it Mixed-sector class.} Mixed-sector realizations test whether the estimator can distinguish composite signals from apparent single-sector activations.

For a posterior threshold \(\tau\), the null false-activation rate (FAR) is
\begin{equation}
\FAR_\alpha(\tau)
\equiv
\Pr\!\left(P_\alpha>\tau \,\middle|\, \text{null benchmark}\right),
\label{eq:far}
\end{equation}
and a simple calibration diagnostic is the expected calibration error (ECE), 
\begin{equation}
\ECE_\alpha
=
\sum_{k=1}^{K}
\frac{n_k}{N}
\Big|
\frac{1}{n_k}\sum_{i\in B_k} P_{\alpha,i}
-
\frac{1}{n_k}\sum_{i\in B_k} y_{\alpha,i}
\Big|,
\label{eq:ece}
\end{equation}
where \(y_{\alpha,i}\in\{0,1\}\) denotes whether sector \(\alpha\) is truly active in realization \(i\). In addition, sector-level posterior-predictive checks should be reported in the likelihood blocks that dominate the inferred sector preference.

\subsection{Robustness diagnostics}
\label{app:robustness}

For prior-width robustness, let \(\pi_\alpha^{(\rho)}\) denote a prior family
obtained by rescaling the support width of the sector-\(\alpha\) amplitudes by a
factor \(\rho\) at fixed shape and center. We then define
\begin{equation}
\eta_\alpha(\rho)
\equiv
\frac{
\ln B_\alpha\big[\pi_\alpha^{(\rho)}\big]
-
\ln B_\alpha\big[\pi_\alpha^{(1)}\big]
}{
\ln \rho
},
\qquad
\rho\in\left\{\tfrac{1}{2},2,4\right\}.
\label{eq:eta_rho}
\end{equation}

For catalog-refinement robustness, define
\begin{equation}
\delta_\alpha^{\rm ref}
\equiv
\max_{\mathcal R}
\left|
P_\alpha^{\mathcal R}-P_\alpha^{\rm base}
\right|,
\qquad
\zeta_\alpha^{\rm ref}
\equiv
\max_{\mathcal R}
\left|
\ln B_\alpha^{\mathcal R}-\ln B_\alpha^{\rm base}
\right|,
\label{eq:catalog_robustness}
\end{equation}
where \(\mathcal R\) ranges over sector-preserving refinements of the catalog.

For basis robustness, let \(\mathcal B_1\) and \(\mathcal B_2\) denote two admissible detection bases. We then define
\begin{equation}
D_{\rm JS}^{\rm basis}(\mathcal B_1,\mathcal B_2)
\equiv
\JS\!\left[
p(\Act\mid D,\mathcal B_1),
p(\Act\mid D,\mathcal B_2)
\right].
\label{eq:js_basis}
\end{equation}
The sector-partition diagnostics \(D_{\rm JS}^{\rm part}\), \(\Delta_\alpha^\lambda\), and \(\Gamma_\alpha\) are defined in Eqs.~(\ref{eq:sector_partition_js}), (\ref{eq:lambda_sensitivity}), and (\ref{eq:Gamma_alpha}), respectively. They are listed in the robustness vector because a sector claim should not survive only as a consequence of a particular high-level partition of the theory space.

\section{Illustrative source families for future pattern-labeled catalogs}
\label{app:illustrative_catalog}

This appendix shows how broad source families may be partitioned into pattern-labeled catalog elements for an analysis using the grouped construction. Table~\ref{tab:modelspace} lists illustrative source families from which a pattern-labeled inference catalog may be constructed. In the grouped analysis, broad source families must be split into pattern-labeled catalog elements before evidences are aggregated according to Sec.~\ref{subsec:grouped_estimator}. A source family listed in Table~\ref{tab:modelspace} need not coincide with a single inference element. In the grouped analysis, any source family whose admissible parameter space can realize more than one activation pattern must be partitioned into pattern-labeled catalog elements before evidences are aggregated. The catalog construction therefore proceeds in two steps: first, identify the broad source family; second, split its admissible parameter region into sectors or subfamilies with a unique activation pattern \(\Act(i)\). Only after that partition is the grouped evidence
\(Z_{\Act}=\sum_{i\in\Cat_{\Act}} Z_i \pi(i\mid\Act)\) well defined.

\begin{table}[!tbp]
\caption{Illustrative source families for constructing a pattern-labeled DESI-era grouped catalog. The entries are examples of broad families that may realize distinct activation sectors; an actual analysis must split any family whose admissible parameter region spans more than one activation pattern.}
\label{tab:modelspace}
\tablefont
\tighttable
\begin{threeparttable}
\begin{tabular}{@{}llll@{}}
\toprule
\tcell{0.17\textwidth}{Model class} &
\tcell{0.32\textwidth}{Representative parameters} &
\tcell{0.18\textwidth}{Activated sectors} &
\tcell{0.26\textwidth}{Inferential role} \\
\midrule
\tcell{0.17\textwidth}{Baseline cosmology} &
\tcell{0.32\textwidth}{base six parameters; optional \(\Sigma m_\nu\), \(\Omega_k\)} &
\tcell{0.18\textwidth}{Base} &
\tcell{0.26\textwidth}{Null family against which all extensions are judged} \\
\tcell{0.17\textwidth}{Smooth late-time dark energy} &
\tcell{0.32\textwidth}{\(w\), \((w_0,w_a)\), limited nonparametric \(\rho_{\rm DE}(z)\) freedom} &
\tcell{0.18\textwidth}{Base + late-time shape} &
\tcell{0.26\textwidth}{Minimal homogeneous alternatives to \lcdm\ and robustness against parameterization bias} \\
\tcell{0.17\textwidth}{Early-time ruler sector / early dark energy (EDE)} &
\tcell{0.32\textwidth}{\(\delta\ln\rd\), \((f_{\rm EDE},\log_{10}z_c,\theta_i)\)} &
\tcell{0.18\textwidth}{Base + early-time ruler} &
\tcell{0.26\textwidth}{Tests whether anomalies originate before recombination} \\
\tcell{0.17\textwidth}{Interacting or clustering dark energy} &
\tcell{0.32\textwidth}{\(\xi\), \(w(a)\), \(c_s^2\)} &
\tcell{0.18\textwidth}{Base + late-time shape + perturbations} &
\tcell{0.26\textwidth}{Allows coupled or clustered dark-sector dynamics} \\
\tcell{0.17\textwidth}{Phenomenological modified gravity} &
\tcell{0.32\textwidth}{\(\mu(z,k),\Sigma(z,k),\eta(z,k)\)} &
\tcell{0.18\textwidth}{Base + perturbations} &
\tcell{0.26\textwidth}{Detection basis for departures from general relativity (GR) in growth and lensing} \\
\tcell{0.17\textwidth}{Effective field theory (EFT)/Horndeski / \(f(R)\)} &
\tcell{0.32\textwidth}{\((\alpha_K,\alpha_B,\alpha_M,\alpha_T)\), \(f_{R0}\), \(B_0\)} &
\tcell{0.18\textwidth}{Base + perturbations + possibly GW propagation} &
\tcell{0.26\textwidth}{Theory-level interpretation of perturbation anomalies} \\
\tcell{0.17\textwidth}{GW propagation / polarization / strong field} &
\tcell{0.33\textwidth}{\((\Xi_0,n)\), polarization fractions, quasi- normal-mode (QNM) or inspiral--merger-- ringdown (IMR) hyperparameters} &
\tcell{0.18\textwidth}{Base + GW sectors} &
\tcell{0.26\textwidth}{Tensor-sector discriminator and gravitational-wave consistency tests} \\
\bottomrule
\end{tabular}
\end{threeparttable}
\end{table}


\begin{thebibliography}{43}%
\makeatletter
\providecommand \@ifxundefined [1]{%
 \@ifx{#1\undefined}
}%
\providecommand \@ifnum [1]{%
 \ifnum #1\expandafter \@firstoftwo
 \else \expandafter \@secondoftwo
 \fi
}%
\providecommand \@ifx [1]{%
 \ifx #1\expandafter \@firstoftwo
 \else \expandafter \@secondoftwo
 \fi
}%
\providecommand \natexlab [1]{#1}%
\providecommand \enquote  [1]{``#1''}%
\providecommand \bibnamefont  [1]{#1}%
\providecommand \bibfnamefont [1]{#1}%
\providecommand \citenamefont [1]{#1}%
\providecommand \href@noop [0]{\@secondoftwo}%
\providecommand \href [0]{\begingroup \@sanitize@url \@href}%
\providecommand \@href[1]{\@@startlink{#1}\@@href}%
\providecommand \@@href[1]{\endgroup#1\@@endlink}%
\providecommand \@sanitize@url [0]{\catcode `\\12\catcode `\$12\catcode
  `\&12\catcode `\#12\catcode `\^12\catcode `\_12\catcode `\%12\relax}%
\providecommand \@@startlink[1]{}%
\providecommand \@@endlink[0]{}%
\providecommand \url  [0]{\begingroup\@sanitize@url \@url }%
\providecommand \@url [1]{\endgroup\@href {#1}{\urlprefix }}%
\providecommand \urlprefix  [0]{URL }%
\providecommand \Eprint [0]{\href }%
\providecommand \doibase [0]{https://doi.org/}%
\providecommand \selectlanguage [0]{\@gobble}%
\providecommand \bibinfo  [0]{\@secondoftwo}%
\providecommand \bibfield  [0]{\@secondoftwo}%
\providecommand \translation [1]{[#1]}%
\providecommand \BibitemOpen [0]{}%
\providecommand \bibitemStop [0]{}%
\providecommand \bibitemNoStop [0]{.\EOS\space}%
\providecommand \EOS [0]{\spacefactor3000\relax}%
\providecommand \BibitemShut  [1]{\csname bibitem#1\endcsname}%
\let\auto@bib@innerbib\@empty
\bibitem [{\citenamefont {{Chevallier}}\ and\ \citenamefont
  {{Polarski}}(2001)}]{Chevallier2001}%
  \BibitemOpen
  \bibfield  {author} {\bibinfo {author} {\bibfnamefont {M.}~\bibnamefont
  {{Chevallier}}}\ and\ \bibinfo {author} {\bibfnamefont {D.}~\bibnamefont
  {{Polarski}}},\ }\bibfield  {title} {\bibinfo {title} {{Accelerating
  Universes with Scaling Dark Matter}},\ }\href
  {https://doi.org/10.1142/S0218271801000822} {\bibfield  {journal} {\bibinfo
  {journal} {IJMPD}\ }\textbf {\bibinfo {volume} {10}},\ \bibinfo {pages} {213}
  (\bibinfo {year} {2001})}\BibitemShut {NoStop}%
\bibitem [{\citenamefont {{Linder}}(2003)}]{Linder2003}%
  \BibitemOpen
  \bibfield  {author} {\bibinfo {author} {\bibfnamefont {E.~V.}\ \bibnamefont
  {{Linder}}},\ }\bibfield  {title} {\bibinfo {title} {{Exploring the Expansion
  History of the Universe}},\ }\href
  {https://doi.org/10.1103/PhysRevLett.90.091301} {\bibfield  {journal}
  {\bibinfo  {journal} {Phys. Rev. Lett.}\ }\textbf {\bibinfo {volume} {90}},\
  \bibinfo {eid} {091301} (\bibinfo {year} {2003})}\BibitemShut {NoStop}%
\bibitem [{\citenamefont {{DESI
  Collaboration}}(2025{\natexlab{a}})}]{DESIDR2BAO2025}%
  \BibitemOpen
  \bibfield  {author} {\bibinfo {author} {\bibnamefont {{DESI
  Collaboration}}},\ }\bibfield  {title} {\bibinfo {title} {{DESI DR2 results.
  II. Measurements of baryon acoustic oscillations and cosmological
  constraints}},\ }\href {https://doi.org/10.1103/tr6y-kpc6} {\bibfield
  {journal} {\bibinfo  {journal} {Phys. Rev. D}\ }\textbf {\bibinfo {volume}
  {112}},\ \bibinfo {eid} {083515} (\bibinfo {year}
  {2025}{\natexlab{a}})}\BibitemShut {NoStop}%
\bibitem [{\citenamefont {{DESI
  Collaboration}}(2025{\natexlab{b}})}]{DESIDR2Extended2025}%
  \BibitemOpen
  \bibfield  {author} {\bibinfo {author} {\bibnamefont {{DESI
  Collaboration}}},\ }\bibfield  {title} {\bibinfo {title} {{Extended dark
  energy analysis using DESI DR2 BAO measurements}},\ }\href
  {https://doi.org/10.1103/w4c6-1r5j} {\bibfield  {journal} {\bibinfo
  {journal} {Phys. Rev. D}\ }\textbf {\bibinfo {volume} {112}},\ \bibinfo {eid}
  {083511} (\bibinfo {year} {2025}{\natexlab{b}})}\BibitemShut {NoStop}%
\bibitem [{\citenamefont {Ong}\ \emph {et~al.}(2025)\citenamefont {Ong},
  \citenamefont {Yallup},\ and\ \citenamefont
  {Handley}}]{OngYallupHandley2025}%
  \BibitemOpen
  \bibfield  {author} {\bibinfo {author} {\bibfnamefont {D.~D.~Y.}\
  \bibnamefont {Ong}}, \bibinfo {author} {\bibfnamefont {D.}~\bibnamefont
  {Yallup}},\ and\ \bibinfo {author} {\bibfnamefont {W.}~\bibnamefont
  {Handley}},\ }\bibfield  {title} {\bibinfo {title} {{A Bayesian Perspective
  on Evidence for Evolving Dark Energy}},\ }\href@noop {} {\bibfield  {journal}
  {\bibinfo  {journal} {arXiv e-prints}\ } (\bibinfo {year} {2025})},\ \Eprint
  {https://arxiv.org/abs/2511.10631} {arXiv:2511.10631 [astro-ph.CO]}
  \BibitemShut {NoStop}%
\bibitem [{\citenamefont {Ong}\ \emph {et~al.}(2026)\citenamefont {Ong},
  \citenamefont {Yallup},\ and\ \citenamefont
  {Handley}}]{OngYallupHandley2026}%
  \BibitemOpen
  \bibfield  {author} {\bibinfo {author} {\bibfnamefont {D.~D.~Y.}\
  \bibnamefont {Ong}}, \bibinfo {author} {\bibfnamefont {D.}~\bibnamefont
  {Yallup}},\ and\ \bibinfo {author} {\bibfnamefont {W.}~\bibnamefont
  {Handley}},\ }\bibfield  {title} {\bibinfo {title} {{The Bayesian View of
  {DESI} {DR2}: Evidence and Tension in a Combined Analysis with {CMB} and
  Supernovae Across Cosmological Models}},\ }\href@noop {} {\bibfield
  {journal} {\bibinfo  {journal} {arXiv e-prints}\ } (\bibinfo {year}
  {2026})},\ \Eprint {https://arxiv.org/abs/2603.05472} {arXiv:2603.05472
  [astro-ph.CO]} \BibitemShut {NoStop}%
\bibitem [{\citenamefont {Popovic}\ \emph {et~al.}(2026)\citenamefont
  {Popovic}, \citenamefont {Shah}, \citenamefont {Kenworthy} \emph
  {et~al.}}]{DESDovekie2025}%
  \BibitemOpen
  \bibfield  {author} {\bibinfo {author} {\bibfnamefont {B.}~\bibnamefont
  {Popovic}}, \bibinfo {author} {\bibfnamefont {P.}~\bibnamefont {Shah}},
  \bibinfo {author} {\bibfnamefont {W.~D.}\ \bibnamefont {Kenworthy}}, \emph
  {et~al.},\ }\bibfield  {title} {\bibinfo {title} {{The Dark Energy Survey
  Supernova Program: A Reanalysis of Cosmology Results and Evidence for
  Evolving Dark Energy with an Updated Type Ia Supernova Calibration}},\ }\href
  {https://doi.org/10.1093/mnras/stag632} {\bibfield  {journal} {\bibinfo
  {journal} {Mon. Not. R. Astron. Soc.}\ }\textbf {\bibinfo {volume} {548}},\
  \bibinfo {pages} {stag632} (\bibinfo {year} {2026})},\ \Eprint
  {https://arxiv.org/abs/2511.07517} {arXiv:2511.07517 [astro-ph.CO]}
  \BibitemShut {NoStop}%
\bibitem [{\citenamefont {{Huang}}\ \emph {et~al.}(2025)\citenamefont
  {{Huang}}, \citenamefont {{Cai}},\ and\ \citenamefont
  {{Wang}}}]{HuangCaiWang2025}%
  \BibitemOpen
  \bibfield  {author} {\bibinfo {author} {\bibfnamefont {L.}~\bibnamefont
  {{Huang}}}, \bibinfo {author} {\bibfnamefont {R.-G.}\ \bibnamefont {{Cai}}},\
  and\ \bibinfo {author} {\bibfnamefont {S.-J.}\ \bibnamefont {{Wang}}},\
  }\bibfield  {title} {\bibinfo {title} {{The DESI DR1/DR2 evidence for
  dynamical dark energy is biased by low-redshift supernovae}},\ }\href
  {https://doi.org/10.1007/s11433-025-2754-5} {\bibfield  {journal} {\bibinfo
  {journal} {Sci. China Phys. Mech. Astron.}\ }\textbf {\bibinfo {volume}
  {68}},\ \bibinfo {eid} {100413} (\bibinfo {year} {2025})}\BibitemShut
  {NoStop}%
\bibitem [{\citenamefont {Carlin}\ and\ \citenamefont
  {Chib}(1995)}]{CarlinChib1995}%
  \BibitemOpen
  \bibfield  {author} {\bibinfo {author} {\bibfnamefont {B.~P.}\ \bibnamefont
  {Carlin}}\ and\ \bibinfo {author} {\bibfnamefont {S.}~\bibnamefont {Chib}},\
  }\bibfield  {title} {\bibinfo {title} {{Bayesian Model Choice via Markov
  Chain Monte Carlo Methods}},\ }\href
  {https://doi.org/10.1111/j.2517-6161.1995.tb02042.x} {\bibfield  {journal}
  {\bibinfo  {journal} {J. R. Stat. Soc. Ser. B}\ }\textbf {\bibinfo {volume}
  {57}},\ \bibinfo {pages} {473} (\bibinfo {year} {1995})}\BibitemShut
  {NoStop}%
\bibitem [{\citenamefont {Green}(1995)}]{Green1995}%
  \BibitemOpen
  \bibfield  {author} {\bibinfo {author} {\bibfnamefont {P.~J.}\ \bibnamefont
  {Green}},\ }\bibfield  {title} {\bibinfo {title} {{Reversible Jump Markov
  Chain Monte Carlo Computation and Bayesian Model Determination}},\ }\href
  {https://doi.org/10.1093/biomet/82.4.711} {\bibfield  {journal} {\bibinfo
  {journal} {Biometrika}\ }\textbf {\bibinfo {volume} {82}},\ \bibinfo {pages}
  {711} (\bibinfo {year} {1995})}\BibitemShut {NoStop}%
\bibitem [{\citenamefont {Trotta}(2008)}]{Trotta2008}%
  \BibitemOpen
  \bibfield  {author} {\bibinfo {author} {\bibfnamefont {R.}~\bibnamefont
  {Trotta}},\ }\bibfield  {title} {\bibinfo {title} {{Bayes in the Sky:
  Bayesian Inference and Model Selection in Cosmology}},\ }\href
  {https://doi.org/10.1080/00107510802066753} {\bibfield  {journal} {\bibinfo
  {journal} {Contemp. Phys.}\ }\textbf {\bibinfo {volume} {49}},\ \bibinfo
  {pages} {71} (\bibinfo {year} {2008})}\BibitemShut {NoStop}%
\bibitem [{\citenamefont {Hoeting}\ \emph {et~al.}(1999)\citenamefont
  {Hoeting}, \citenamefont {Madigan}, \citenamefont {Raftery},\ and\
  \citenamefont {Volinsky}}]{Hoeting1999BMA}%
  \BibitemOpen
  \bibfield  {author} {\bibinfo {author} {\bibfnamefont {J.~A.}\ \bibnamefont
  {Hoeting}}, \bibinfo {author} {\bibfnamefont {D.}~\bibnamefont {Madigan}},
  \bibinfo {author} {\bibfnamefont {A.~E.}\ \bibnamefont {Raftery}},\ and\
  \bibinfo {author} {\bibfnamefont {C.~T.}\ \bibnamefont {Volinsky}},\
  }\bibfield  {title} {\bibinfo {title} {{Bayesian Model Averaging: A
  Tutorial}},\ }\href {https://doi.org/10.1214/ss/1009212519} {\bibfield
  {journal} {\bibinfo  {journal} {Statistical Science}\ }\textbf {\bibinfo
  {volume} {14}},\ \bibinfo {pages} {382} (\bibinfo {year} {1999})}\BibitemShut
  {NoStop}%
\bibitem [{\citenamefont {{Handley}}\ and\ \citenamefont
  {{Lemos}}(2019)}]{HandleyLemos2019}%
  \BibitemOpen
  \bibfield  {author} {\bibinfo {author} {\bibfnamefont {W.}~\bibnamefont
  {{Handley}}}\ and\ \bibinfo {author} {\bibfnamefont {P.}~\bibnamefont
  {{Lemos}}},\ }\bibfield  {title} {\bibinfo {title} {{Quantifying
  Dimensionality: Bayesian Cosmological Model Complexities}},\ }\href
  {https://doi.org/10.1103/PhysRevD.100.023512} {\bibfield  {journal} {\bibinfo
   {journal} {Phys. Rev. D}\ }\textbf {\bibinfo {volume} {100}},\ \bibinfo
  {eid} {023512} (\bibinfo {year} {2019})}\BibitemShut {NoStop}%
\bibitem [{\citenamefont {{Lemos}}\ \emph {et~al.}(2020)\citenamefont
  {{Lemos}}, \citenamefont {{K{\"o}hlinger}}, \citenamefont {{Handley}},
  \citenamefont {{Joachimi}}, \citenamefont {{Whiteway}},\ and\ \citenamefont
  {{Lahav}}}]{Lemos2019Suspiciousness}%
  \BibitemOpen
  \bibfield  {author} {\bibinfo {author} {\bibfnamefont {P.}~\bibnamefont
  {{Lemos}}}, \bibinfo {author} {\bibfnamefont {F.}~\bibnamefont
  {{K{\"o}hlinger}}}, \bibinfo {author} {\bibfnamefont {W.}~\bibnamefont
  {{Handley}}}, \bibinfo {author} {\bibfnamefont {B.}~\bibnamefont
  {{Joachimi}}}, \bibinfo {author} {\bibfnamefont {L.}~\bibnamefont
  {{Whiteway}}},\ and\ \bibinfo {author} {\bibfnamefont {O.}~\bibnamefont
  {{Lahav}}},\ }\bibfield  {title} {\bibinfo {title} {{Quantifying
  Suspiciousness within Correlated Data Sets}},\ }\href
  {https://doi.org/10.1093/mnras/staa1836} {\bibfield  {journal} {\bibinfo
  {journal} {MNRAS}\ }\textbf {\bibinfo {volume} {496}},\ \bibinfo {pages}
  {4647} (\bibinfo {year} {2020})}\BibitemShut {NoStop}%
\bibitem [{\citenamefont {Hergt}\ \emph {et~al.}(2026)\citenamefont {Hergt},
  \citenamefont {Henrot-Versill{\'e}}, \citenamefont {Tristram},\ and\
  \citenamefont {Scott}}]{Hergt2026Consistency}%
  \BibitemOpen
  \bibfield  {author} {\bibinfo {author} {\bibfnamefont {L.~T.}\ \bibnamefont
  {Hergt}}, \bibinfo {author} {\bibfnamefont {S.}~\bibnamefont
  {Henrot-Versill{\'e}}}, \bibinfo {author} {\bibfnamefont {M.}~\bibnamefont
  {Tristram}},\ and\ \bibinfo {author} {\bibfnamefont {D.}~\bibnamefont
  {Scott}},\ }\bibfield  {title} {\bibinfo {title} {{Consistency of Standard
  Cosmologies Using Bayesian Model Comparison and Tension Quantification}},\
  }\href@noop {} {\bibfield  {journal} {\bibinfo  {journal} {arXiv e-prints}\ }
  (\bibinfo {year} {2026})},\ \Eprint {https://arxiv.org/abs/2602.06115}
  {arXiv:2602.06115 [astro-ph.CO]} \BibitemShut {NoStop}%
\bibitem [{\citenamefont {Lewis}\ and\ \citenamefont
  {Challinor}(2006)}]{LewisChallinor2006}%
  \BibitemOpen
  \bibfield  {author} {\bibinfo {author} {\bibfnamefont {A.}~\bibnamefont
  {Lewis}}\ and\ \bibinfo {author} {\bibfnamefont {A.}~\bibnamefont
  {Challinor}},\ }\bibfield  {title} {\bibinfo {title} {{Weak Gravitational
  Lensing of the {CMB}}},\ }\href
  {https://doi.org/10.1016/j.physrep.2006.03.002} {\bibfield  {journal}
  {\bibinfo  {journal} {Phys. Rep.}\ }\textbf {\bibinfo {volume} {429}},\
  \bibinfo {pages} {1} (\bibinfo {year} {2006})},\ \Eprint
  {https://arxiv.org/abs/astro-ph/0601594} {arXiv:astro-ph/0601594}
  \BibitemShut {NoStop}%
\bibitem [{\citenamefont {{Planck
  Collaboration}}(2020{\natexlab{a}})}]{Planck2018Lensing}%
  \BibitemOpen
  \bibfield  {author} {\bibinfo {author} {\bibnamefont {{Planck
  Collaboration}}},\ }\bibfield  {title} {\bibinfo {title} {{Planck 2018
  Results. VIII. Gravitational Lensing}},\ }\href
  {https://doi.org/10.1051/0004-6361/201833886} {\bibfield  {journal} {\bibinfo
   {journal} {Astron. Astrophys.}\ }\textbf {\bibinfo {volume} {641}},\
  \bibinfo {pages} {A8} (\bibinfo {year} {2020}{\natexlab{a}})},\ \Eprint
  {https://arxiv.org/abs/1807.06210} {arXiv:1807.06210 [astro-ph.CO]}
  \BibitemShut {NoStop}%
\bibitem [{\citenamefont {Alcock}\ and\ \citenamefont
  {Paczy\'nski}(1979)}]{AlcockPaczynski1979}%
  \BibitemOpen
  \bibfield  {author} {\bibinfo {author} {\bibfnamefont {C.}~\bibnamefont
  {Alcock}}\ and\ \bibinfo {author} {\bibfnamefont {B.}~\bibnamefont
  {Paczy\'nski}},\ }\bibfield  {title} {\bibinfo {title} {{An Evolution Free
  Test for Non-zero Cosmological Constant}},\ }\href
  {https://doi.org/10.1038/281358a0} {\bibfield  {journal} {\bibinfo  {journal}
  {Nature}\ }\textbf {\bibinfo {volume} {281}},\ \bibinfo {pages} {358}
  (\bibinfo {year} {1979})}\BibitemShut {NoStop}%
\bibitem [{\citenamefont {Eisenstein}\ \emph {et~al.}(2005)\citenamefont
  {Eisenstein}, \citenamefont {Zehavi}, \citenamefont {Hogg}, \citenamefont
  {Scoccimarro}, \citenamefont {Blanton}, \citenamefont {Nichol}, \citenamefont
  {Scranton}, \citenamefont {Seo}, \citenamefont {Tegmark}, \citenamefont
  {Zheng} \emph {et~al.}}]{Eisenstein2005}%
  \BibitemOpen
  \bibfield  {author} {\bibinfo {author} {\bibfnamefont {D.~J.}\ \bibnamefont
  {Eisenstein}}, \bibinfo {author} {\bibfnamefont {I.}~\bibnamefont {Zehavi}},
  \bibinfo {author} {\bibfnamefont {D.~W.}\ \bibnamefont {Hogg}}, \bibinfo
  {author} {\bibfnamefont {R.}~\bibnamefont {Scoccimarro}}, \bibinfo {author}
  {\bibfnamefont {M.~R.}\ \bibnamefont {Blanton}}, \bibinfo {author}
  {\bibfnamefont {R.~C.}\ \bibnamefont {Nichol}}, \bibinfo {author}
  {\bibfnamefont {R.}~\bibnamefont {Scranton}}, \bibinfo {author}
  {\bibfnamefont {H.-J.}\ \bibnamefont {Seo}}, \bibinfo {author} {\bibfnamefont
  {M.}~\bibnamefont {Tegmark}}, \bibinfo {author} {\bibfnamefont
  {Z.}~\bibnamefont {Zheng}}, \emph {et~al.},\ }\bibfield  {title} {\bibinfo
  {title} {{Detection of the Baryon Acoustic Peak in the Large-Scale
  Correlation Function of {SDSS} Luminous Red Galaxies}},\ }\href
  {https://doi.org/10.1086/466512} {\bibfield  {journal} {\bibinfo  {journal}
  {Astrophys. J.}\ }\textbf {\bibinfo {volume} {633}},\ \bibinfo {pages} {560}
  (\bibinfo {year} {2005})},\ \Eprint {https://arxiv.org/abs/astro-ph/0501171}
  {arXiv:astro-ph/0501171} \BibitemShut {NoStop}%
\bibitem [{\citenamefont {Bernal}\ \emph {et~al.}(2020)\citenamefont {Bernal},
  \citenamefont {Smith}, \citenamefont {Boddy},\ and\ \citenamefont
  {Kamionkowski}}]{Bernal2020BAORobustness}%
  \BibitemOpen
  \bibfield  {author} {\bibinfo {author} {\bibfnamefont {J.~L.}\ \bibnamefont
  {Bernal}}, \bibinfo {author} {\bibfnamefont {T.~L.}\ \bibnamefont {Smith}},
  \bibinfo {author} {\bibfnamefont {K.~K.}\ \bibnamefont {Boddy}},\ and\
  \bibinfo {author} {\bibfnamefont {M.}~\bibnamefont {Kamionkowski}},\
  }\bibfield  {title} {\bibinfo {title} {{Robustness of Baryon Acoustic
  Oscillation Constraints for Early-Universe Modifications to $\Lambda$CDM}},\
  }\href {https://doi.org/10.1103/PhysRevD.102.123515} {\bibfield  {journal}
  {\bibinfo  {journal} {Phys. Rev. D}\ }\textbf {\bibinfo {volume} {102}},\
  \bibinfo {pages} {123515} (\bibinfo {year} {2020})}\BibitemShut {NoStop}%
\bibitem [{\citenamefont {Pan}\ \emph {et~al.}(2024)\citenamefont {Pan},
  \citenamefont {Huterer}, \citenamefont {Andrade-Oliveira},\ and\
  \citenamefont {Avestruz}}]{Pan2024CompressedBAO}%
  \BibitemOpen
  \bibfield  {author} {\bibinfo {author} {\bibfnamefont {J.}~\bibnamefont
  {Pan}}, \bibinfo {author} {\bibfnamefont {D.}~\bibnamefont {Huterer}},
  \bibinfo {author} {\bibfnamefont {F.}~\bibnamefont {Andrade-Oliveira}},\ and\
  \bibinfo {author} {\bibfnamefont {C.}~\bibnamefont {Avestruz}},\ }\bibfield
  {title} {\bibinfo {title} {{Compressed Baryon Acoustic Oscillation Analysis
  is Robust to Modified-Gravity Models}},\ }\href
  {https://doi.org/10.1088/1475-7516/2024/06/051} {\bibfield  {journal}
  {\bibinfo  {journal} {JCAP}\ }\textbf {\bibinfo {volume} {2024}}\bibinfo
  {number} { (06)},\ \bibinfo {pages} {051}}\BibitemShut {NoStop}%
\bibitem [{\citenamefont {Skilling}(2006)}]{Skilling2006Nested}%
  \BibitemOpen
\bibfield  {number} {  }\bibfield  {author} {\bibinfo {author} {\bibfnamefont
  {J.}~\bibnamefont {Skilling}},\ }\bibfield  {title} {\bibinfo {title}
  {{Nested Sampling for General Bayesian Computation}},\ }\href
  {https://doi.org/10.1214/06-BA127} {\bibfield  {journal} {\bibinfo  {journal}
  {Bayesian Analysis}\ }\textbf {\bibinfo {volume} {1}},\ \bibinfo {pages}
  {833} (\bibinfo {year} {2006})}\BibitemShut {NoStop}%
\bibitem [{\citenamefont {Feroz}\ \emph {et~al.}(2009)\citenamefont {Feroz},
  \citenamefont {Hobson},\ and\ \citenamefont {Bridges}}]{Feroz2009MultiNest}%
  \BibitemOpen
  \bibfield  {author} {\bibinfo {author} {\bibfnamefont {F.}~\bibnamefont
  {Feroz}}, \bibinfo {author} {\bibfnamefont {M.~P.}\ \bibnamefont {Hobson}},\
  and\ \bibinfo {author} {\bibfnamefont {M.}~\bibnamefont {Bridges}},\
  }\bibfield  {title} {\bibinfo {title} {{MultiNest: An Efficient and Robust
  Bayesian Inference Tool for Cosmology and Particle Physics}},\ }\href
  {https://doi.org/10.1111/j.1365-2966.2009.14548.x} {\bibfield  {journal}
  {\bibinfo  {journal} {Mon. Not. Roy. Astron. Soc.}\ }\textbf {\bibinfo
  {volume} {398}},\ \bibinfo {pages} {1601} (\bibinfo {year} {2009})},\ \Eprint
  {https://arxiv.org/abs/0809.3437} {arXiv:0809.3437 [astro-ph]} \BibitemShut
  {NoStop}%
\bibitem [{\citenamefont {Handley}\ \emph {et~al.}(2015)\citenamefont
  {Handley}, \citenamefont {Hobson},\ and\ \citenamefont
  {Lasenby}}]{Handley2015PolyChord}%
  \BibitemOpen
  \bibfield  {author} {\bibinfo {author} {\bibfnamefont {W.~J.}\ \bibnamefont
  {Handley}}, \bibinfo {author} {\bibfnamefont {M.~P.}\ \bibnamefont
  {Hobson}},\ and\ \bibinfo {author} {\bibfnamefont {A.~N.}\ \bibnamefont
  {Lasenby}},\ }\bibfield  {title} {\bibinfo {title} {{PolyChord: Nested
  Sampling for Cosmology}},\ }\href {https://doi.org/10.1093/mnrasl/slv047}
  {\bibfield  {journal} {\bibinfo  {journal} {Mon. Not. Roy. Astron. Soc.
  Lett.}\ }\textbf {\bibinfo {volume} {450}},\ \bibinfo {pages} {L61} (\bibinfo
  {year} {2015})},\ \Eprint {https://arxiv.org/abs/1502.01856}
  {arXiv:1502.01856 [astro-ph.CO]} \BibitemShut {NoStop}%
\bibitem [{\citenamefont {Speagle}(2020)}]{Speagle2020Dynesty}%
  \BibitemOpen
  \bibfield  {author} {\bibinfo {author} {\bibfnamefont {J.~S.}\ \bibnamefont
  {Speagle}},\ }\bibfield  {title} {\bibinfo {title} {{dynesty: A Dynamic
  Nested Sampling Package for Estimating Bayesian Posteriors and Evidences}},\
  }\href {https://doi.org/10.1093/mnras/staa278} {\bibfield  {journal}
  {\bibinfo  {journal} {Mon. Not. Roy. Astron. Soc.}\ }\textbf {\bibinfo
  {volume} {493}},\ \bibinfo {pages} {3132} (\bibinfo {year} {2020})},\ \Eprint
  {https://arxiv.org/abs/1904.02180} {arXiv:1904.02180 [astro-ph.IM]}
  \BibitemShut {NoStop}%
\bibitem [{\citenamefont {{Torrado}}\ and\ \citenamefont
  {{Lewis}}(2021)}]{TorradoLewis2021Cobaya}%
  \BibitemOpen
  \bibfield  {author} {\bibinfo {author} {\bibfnamefont {J.}~\bibnamefont
  {{Torrado}}}\ and\ \bibinfo {author} {\bibfnamefont {A.}~\bibnamefont
  {{Lewis}}},\ }\bibfield  {title} {\bibinfo {title} {{Cobaya: code for
  Bayesian analysis of hierarchical physical models}},\ }\href
  {https://doi.org/10.1088/1475-7516/2021/05/057} {\bibfield  {journal}
  {\bibinfo  {journal} {JCAP}\ }\textbf {\bibinfo {volume} {2021}}\bibinfo
  {number} { (5)},\ \bibinfo {eid} {057}}\BibitemShut {NoStop}%
\bibitem [{\citenamefont {{Lewis}}\ \emph {et~al.}(2000)\citenamefont
  {{Lewis}}, \citenamefont {{Challinor}},\ and\ \citenamefont
  {{Lasenby}}}]{Lewis2000CAMB}%
  \BibitemOpen
\bibfield  {number} {  }\bibfield  {author} {\bibinfo {author} {\bibfnamefont
  {A.}~\bibnamefont {{Lewis}}}, \bibinfo {author} {\bibfnamefont
  {A.}~\bibnamefont {{Challinor}}},\ and\ \bibinfo {author} {\bibfnamefont
  {A.}~\bibnamefont {{Lasenby}}},\ }\bibfield  {title} {\bibinfo {title}
  {{Efficient Computation of Cosmic Microwave Background Anisotropies in Closed
  Friedmann-Robertson-Walker Models}},\ }\href {https://doi.org/10.1086/309179}
  {\bibfield  {journal} {\bibinfo  {journal} {Astrophys. J.}\ }\textbf
  {\bibinfo {volume} {538}},\ \bibinfo {pages} {473} (\bibinfo {year}
  {2000})}\BibitemShut {NoStop}%
\bibitem [{\citenamefont {{Blas}}\ \emph {et~al.}(2011)\citenamefont {{Blas}},
  \citenamefont {{Lesgourgues}},\ and\ \citenamefont {{Tram}}}]{Blas2011CLASS}%
  \BibitemOpen
  \bibfield  {author} {\bibinfo {author} {\bibfnamefont {D.}~\bibnamefont
  {{Blas}}}, \bibinfo {author} {\bibfnamefont {J.}~\bibnamefont
  {{Lesgourgues}}},\ and\ \bibinfo {author} {\bibfnamefont {T.}~\bibnamefont
  {{Tram}}},\ }\bibfield  {title} {\bibinfo {title} {{The Cosmic Linear
  Anisotropy Solving System (CLASS). Part II: Approximation schemes}},\ }\href
  {https://doi.org/10.1088/1475-7516/2011/07/034} {\bibfield  {journal}
  {\bibinfo  {journal} {JCAP}\ }\textbf {\bibinfo {volume} {2011}}\bibinfo
  {number} { (7)},\ \bibinfo {eid} {034}}\BibitemShut {NoStop}%
\bibitem [{\citenamefont {{Hu}}\ \emph {et~al.}(2014)\citenamefont {{Hu}},
  \citenamefont {{Raveri}}, \citenamefont {{Frusciante}},\ and\ \citenamefont
  {{Silvestri}}}]{Hu2014EFTCAMB}%
  \BibitemOpen
\bibfield  {number} {  }\bibfield  {author} {\bibinfo {author} {\bibfnamefont
  {B.}~\bibnamefont {{Hu}}}, \bibinfo {author} {\bibfnamefont {M.}~\bibnamefont
  {{Raveri}}}, \bibinfo {author} {\bibfnamefont {N.}~\bibnamefont
  {{Frusciante}}},\ and\ \bibinfo {author} {\bibfnamefont {A.}~\bibnamefont
  {{Silvestri}}},\ }\bibfield  {title} {\bibinfo {title} {{Effective field
  theory of cosmic acceleration: An implementation in CAMB}},\ }\href
  {https://doi.org/10.1103/PhysRevD.89.103530} {\bibfield  {journal} {\bibinfo
  {journal} {Phys. Rev. D}\ }\textbf {\bibinfo {volume} {89}},\ \bibinfo {eid}
  {103530} (\bibinfo {year} {2014})}\BibitemShut {NoStop}%
\bibitem [{\citenamefont {{Zumalac{\'a}rregui}}\ \emph
  {et~al.}(2017)\citenamefont {{Zumalac{\'a}rregui}}, \citenamefont
  {{Bellini}}, \citenamefont {{Sawicki}}, \citenamefont {{Lesgourgues}},\ and\
  \citenamefont {{Ferreira}}}]{Zumalacarregui2017HICLASS}%
  \BibitemOpen
  \bibfield  {author} {\bibinfo {author} {\bibfnamefont {M.}~\bibnamefont
  {{Zumalac{\'a}rregui}}}, \bibinfo {author} {\bibfnamefont {E.}~\bibnamefont
  {{Bellini}}}, \bibinfo {author} {\bibfnamefont {I.}~\bibnamefont
  {{Sawicki}}}, \bibinfo {author} {\bibfnamefont {J.}~\bibnamefont
  {{Lesgourgues}}},\ and\ \bibinfo {author} {\bibfnamefont {P.~G.}\
  \bibnamefont {{Ferreira}}},\ }\bibfield  {title} {\bibinfo {title}
  {{hi\_class: Horndeski in the Cosmic Linear Anisotropy Solving System}},\
  }\href {https://doi.org/10.1088/1475-7516/2017/08/019} {\bibfield  {journal}
  {\bibinfo  {journal} {JCAP}\ }\textbf {\bibinfo {volume} {2017}}\bibinfo
  {number} { (8)},\ \bibinfo {eid} {019}}\BibitemShut {NoStop}%
\bibitem [{\citenamefont {{DESI
  Collaboration}}(2025{\natexlab{c}})}]{DESIDR2Lya2025}%
  \BibitemOpen
\bibfield  {number} {  }\bibfield  {author} {\bibinfo {author} {\bibnamefont
  {{DESI Collaboration}}},\ }\bibfield  {title} {\bibinfo {title} {{DESI DR2
  results. I. Baryon acoustic oscillations from the Lyman alpha forest}},\
  }\href {https://doi.org/10.1103/2wwn-xjm5} {\bibfield  {journal} {\bibinfo
  {journal} {Phys. Rev. D}\ }\textbf {\bibinfo {volume} {112}},\ \bibinfo {eid}
  {083514} (\bibinfo {year} {2025}{\natexlab{c}})}\BibitemShut {NoStop}%
\bibitem [{\citenamefont {{Planck
  Collaboration}}(2020{\natexlab{b}})}]{Planck2018Parameters}%
  \BibitemOpen
  \bibfield  {author} {\bibinfo {author} {\bibnamefont {{Planck
  Collaboration}}},\ }\bibfield  {title} {\bibinfo {title} {{Planck 2018
  Results. VI. Cosmological Parameters}},\ }\href
  {https://doi.org/10.1051/0004-6361/201833910} {\bibfield  {journal} {\bibinfo
   {journal} {Astron. Astrophys.}\ }\textbf {\bibinfo {volume} {641}},\
  \bibinfo {pages} {A6} (\bibinfo {year} {2020}{\natexlab{b}})},\ \Eprint
  {https://arxiv.org/abs/1807.06209} {arXiv:1807.06209 [astro-ph.CO]}
  \BibitemShut {NoStop}%
\bibitem [{\citenamefont {{ACT Collaboration}}(2025)}]{ACTDR6Extended2025}%
  \BibitemOpen
  \bibfield  {author} {\bibinfo {author} {\bibnamefont {{ACT Collaboration}}},\
  }\bibfield  {title} {\bibinfo {title} {{The Atacama Cosmology Telescope: DR6
  Constraints on Extended Cosmological Models}},\ }\href
  {https://doi.org/10.1088/1475-7516/2025/11/063} {\bibfield  {journal}
  {\bibinfo  {journal} {JCAP}\ }\textbf {\bibinfo {volume} {2025}}\bibfield
  {number} {\bibinfo  {number} { (11)},\ \bibinfo {pages} {063}},\ }\Eprint
  {https://arxiv.org/abs/2503.14454} {arXiv:2503.14454 [astro-ph.CO]}
  \BibitemShut {NoStop}%
\bibitem [{\citenamefont {Brout}\ \emph {et~al.}(2022)\citenamefont {Brout},
  \citenamefont {Scolnic}, \citenamefont {Popovic}, \citenamefont {Riess},
  \citenamefont {Zuntz}, \citenamefont {Kessler}, \citenamefont {Davis},
  \citenamefont {Hinton}, \citenamefont {Jones} \emph
  {et~al.}}]{BroutPantheonPlus2022}%
  \BibitemOpen
  \bibfield  {author} {\bibinfo {author} {\bibfnamefont {D.}~\bibnamefont
  {Brout}}, \bibinfo {author} {\bibfnamefont {D.}~\bibnamefont {Scolnic}},
  \bibinfo {author} {\bibfnamefont {B.}~\bibnamefont {Popovic}}, \bibinfo
  {author} {\bibfnamefont {A.~G.}\ \bibnamefont {Riess}}, \bibinfo {author}
  {\bibfnamefont {J.}~\bibnamefont {Zuntz}}, \bibinfo {author} {\bibfnamefont
  {R.}~\bibnamefont {Kessler}}, \bibinfo {author} {\bibfnamefont {T.~M.}\
  \bibnamefont {Davis}}, \bibinfo {author} {\bibfnamefont {S.}~\bibnamefont
  {Hinton}}, \bibinfo {author} {\bibfnamefont {D.}~\bibnamefont {Jones}}, \emph
  {et~al.},\ }\bibfield  {title} {\bibinfo {title} {{The Pantheon+ Analysis:
  Cosmological Constraints}},\ }\href
  {https://doi.org/10.3847/1538-4357/ac8e04} {\bibfield  {journal} {\bibinfo
  {journal} {Astrophys. J.}\ }\textbf {\bibinfo {volume} {938}},\ \bibinfo
  {pages} {110} (\bibinfo {year} {2022})},\ \Eprint
  {https://arxiv.org/abs/2202.04077} {arXiv:2202.04077 [astro-ph.CO]}
  \BibitemShut {NoStop}%
\bibitem [{\citenamefont {Rubin}\ \emph {et~al.}(2025)\citenamefont {Rubin},
  \citenamefont {Aldering}, \citenamefont {Betoule}, \citenamefont {Fruchter},
  \citenamefont {Huang} \emph {et~al.}}]{RubinUnion32023}%
  \BibitemOpen
  \bibfield  {author} {\bibinfo {author} {\bibfnamefont {D.}~\bibnamefont
  {Rubin}}, \bibinfo {author} {\bibfnamefont {G.}~\bibnamefont {Aldering}},
  \bibinfo {author} {\bibfnamefont {M.}~\bibnamefont {Betoule}}, \bibinfo
  {author} {\bibfnamefont {A.}~\bibnamefont {Fruchter}}, \bibinfo {author}
  {\bibfnamefont {X.}~\bibnamefont {Huang}}, \emph {et~al.},\ }\bibfield
  {title} {\bibinfo {title} {{Union Through UNITY: Cosmology with 2000 SNe
  Using a Unified Bayesian Framework}},\ }\href
  {https://doi.org/10.3847/1538-4357/adc0a5} {\bibfield  {journal} {\bibinfo
  {journal} {Astrophys. J.}\ }\textbf {\bibinfo {volume} {986}},\ \bibinfo
  {pages} {231} (\bibinfo {year} {2025})},\ \Eprint
  {https://arxiv.org/abs/2311.12098} {arXiv:2311.12098 [astro-ph.CO]}
  \BibitemShut {NoStop}%
\bibitem [{\citenamefont {{DES Collaboration}}(2024)}]{DESY5SN2024}%
  \BibitemOpen
  \bibfield  {author} {\bibinfo {author} {\bibnamefont {{DES Collaboration}}},\
  }\bibfield  {title} {\bibinfo {title} {{The Dark Energy Survey: Cosmology
  Results with $\sim$1500 New High-redshift Type Ia Supernovae Using the Full 5
  yr Data Set}},\ }\href {https://doi.org/10.3847/2041-8213/ad6f9f} {\bibfield
  {journal} {\bibinfo  {journal} {Astrophys. J. Lett.}\ }\textbf {\bibinfo
  {volume} {973}},\ \bibinfo {pages} {L14} (\bibinfo {year} {2024})},\ \Eprint
  {https://arxiv.org/abs/2401.02929} {arXiv:2401.02929 [astro-ph.CO]}
  \BibitemShut {NoStop}%
\bibitem [{\citenamefont {Gelman}\ \emph {et~al.}(1996)\citenamefont {Gelman},
  \citenamefont {Meng},\ and\ \citenamefont {Stern}}]{GelmanMengStern1996PPC}%
  \BibitemOpen
  \bibfield  {author} {\bibinfo {author} {\bibfnamefont {A.}~\bibnamefont
  {Gelman}}, \bibinfo {author} {\bibfnamefont {X.-L.}\ \bibnamefont {Meng}},\
  and\ \bibinfo {author} {\bibfnamefont {H.}~\bibnamefont {Stern}},\ }\bibfield
   {title} {\bibinfo {title} {{Posterior Predictive Assessment of Model Fitness
  via Realized Discrepancies}},\ }\href@noop {} {\bibfield  {journal} {\bibinfo
   {journal} {Statistica Sinica}\ }\textbf {\bibinfo {volume} {6}},\ \bibinfo
  {pages} {733} (\bibinfo {year} {1996})}\BibitemShut {NoStop}%
\bibitem [{\citenamefont {Gneiting}\ and\ \citenamefont
  {Raftery}(2007)}]{GneitingRaftery2007}%
  \BibitemOpen
  \bibfield  {author} {\bibinfo {author} {\bibfnamefont {T.}~\bibnamefont
  {Gneiting}}\ and\ \bibinfo {author} {\bibfnamefont {A.~E.}\ \bibnamefont
  {Raftery}},\ }\bibfield  {title} {\bibinfo {title} {{Strictly Proper Scoring
  Rules, Prediction, and Estimation}},\ }\href
  {https://doi.org/10.1198/016214506000001437} {\bibfield  {journal} {\bibinfo
  {journal} {Journal of the American Statistical Association}\ }\textbf
  {\bibinfo {volume} {102}},\ \bibinfo {pages} {359} (\bibinfo {year}
  {2007})}\BibitemShut {NoStop}%
\bibitem [{\citenamefont {Clopper}\ and\ \citenamefont
  {Pearson}(1934)}]{ClopperPearson1934}%
  \BibitemOpen
  \bibfield  {author} {\bibinfo {author} {\bibfnamefont {C.~J.}\ \bibnamefont
  {Clopper}}\ and\ \bibinfo {author} {\bibfnamefont {E.~S.}\ \bibnamefont
  {Pearson}},\ }\bibfield  {title} {\bibinfo {title} {{The Use of Confidence or
  Fiducial Limits Illustrated in the Case of the Binomial}},\ }\href
  {https://doi.org/10.1093/biomet/26.4.404} {\bibfield  {journal} {\bibinfo
  {journal} {Biometrika}\ }\textbf {\bibinfo {volume} {26}},\ \bibinfo {pages}
  {404} (\bibinfo {year} {1934})}\BibitemShut {NoStop}%
\bibitem [{\citenamefont {Bellini}\ and\ \citenamefont
  {Sawicki}(2014)}]{BelliniSawicki2014}%
  \BibitemOpen
  \bibfield  {author} {\bibinfo {author} {\bibfnamefont {E.}~\bibnamefont
  {Bellini}}\ and\ \bibinfo {author} {\bibfnamefont {I.}~\bibnamefont
  {Sawicki}},\ }\bibfield  {title} {\bibinfo {title} {{Maximal Freedom at
  Minimum Cost: Linear Large-Scale Structure in General Modifications of
  Gravity}},\ }\href {https://doi.org/10.1088/1475-7516/2014/07/050} {\bibfield
   {journal} {\bibinfo  {journal} {JCAP}\ }\textbf {\bibinfo {volume}
  {2014}}\bibinfo  {number} { (07)},\ \bibinfo {pages} {050}}\BibitemShut
  {NoStop}%
\bibitem [{\citenamefont {{Ishak}}\ \emph {et~al.}(2025)\citenamefont
  {{Ishak}}, \citenamefont {{Pan}}, \citenamefont {{Calderon}} \emph
  {et~al.}}]{Ishak2025MGDESI}%
  \BibitemOpen
\bibfield  {number} {  }\bibfield  {author} {\bibinfo {author} {\bibfnamefont
  {M.}~\bibnamefont {{Ishak}}}, \bibinfo {author} {\bibfnamefont
  {J.}~\bibnamefont {{Pan}}}, \bibinfo {author} {\bibfnamefont
  {R.}~\bibnamefont {{Calderon}}}, \emph {et~al.},\ }\bibfield  {title}
  {\bibinfo {title} {{Modified gravity constraints from the full shape modeling
  of clustering measurements from DESI 2024}},\ }\href
  {https://doi.org/10.1088/1475-7516/2025/09/053} {\bibfield  {journal}
  {\bibinfo  {journal} {JCAP}\ }\textbf {\bibinfo {volume} {2025}}\bibinfo
  {number} { (9)},\ \bibinfo {eid} {053}}\BibitemShut {NoStop}%
\bibitem [{\citenamefont {Belgacem}\ \emph {et~al.}(2018)\citenamefont
  {Belgacem}, \citenamefont {Dirian}, \citenamefont {Foffa},\ and\
  \citenamefont {Maggiore}}]{Belgacem2018}%
  \BibitemOpen
\bibfield  {number} {  }\bibfield  {author} {\bibinfo {author} {\bibfnamefont
  {E.}~\bibnamefont {Belgacem}}, \bibinfo {author} {\bibfnamefont
  {Y.}~\bibnamefont {Dirian}}, \bibinfo {author} {\bibfnamefont
  {S.}~\bibnamefont {Foffa}},\ and\ \bibinfo {author} {\bibfnamefont
  {M.}~\bibnamefont {Maggiore}},\ }\bibfield  {title} {\bibinfo {title}
  {{Modified gravitational-wave propagation and standard sirens}},\ }\href
  {https://doi.org/10.1103/PhysRevD.98.023510} {\bibfield  {journal} {\bibinfo
  {journal} {Phys. Rev. D}\ }\textbf {\bibinfo {volume} {98}},\ \bibinfo
  {pages} {023510} (\bibinfo {year} {2018})}\BibitemShut {NoStop}%
\bibitem [{\citenamefont {{LIGO Scientific Collaboration}}\ \emph
  {et~al.}(2026)\citenamefont {{LIGO Scientific Collaboration}}, \citenamefont
  {{Virgo Collaboration}},\ and\ \citenamefont {{KAGRA
  Collaboration}}}]{LVK2026GWTC5}%
  \BibitemOpen
  \bibfield  {author} {\bibinfo {author} {\bibnamefont {{LIGO Scientific
  Collaboration}}}, \bibinfo {author} {\bibnamefont {{Virgo Collaboration}}},\
  and\ \bibinfo {author} {\bibnamefont {{KAGRA Collaboration}}},\ }\bibfield
  {title} {\bibinfo {title} {{{GWTC-5.0}: Constraints on the Cosmic Expansion
  Rate and Modified Gravitational-Wave Propagation}},\ }\href@noop {}
  {\bibfield  {journal} {\bibinfo  {journal} {arXiv e-prints}\ } (\bibinfo
  {year} {2026})},\ \Eprint {https://arxiv.org/abs/2605.27227}
  {arXiv:2605.27227 [gr-qc]} \BibitemShut {NoStop}%
\end{thebibliography}

%

\end{document}